\documentclass[lettersize,journal]{IEEEtran}
\usepackage{amsmath,amsfonts,amsthm}
\usepackage{algorithmic}
\usepackage{algorithm}
\usepackage{array}
\usepackage{subcaption}
\usepackage{textcomp}
\usepackage{stfloats}
\usepackage{url}
\usepackage{verbatim}
\usepackage{graphicx}
\usepackage{cite}
\usepackage{booktabs}
\usepackage{diagbox}
\usepackage[hidelinks]{hyperref}

\usepackage{multirow}

\usepackage{tcolorbox}
\usepackage{xcolor}
\usepackage{tikz}
\usetikzlibrary{arrows.meta,positioning,calc,decorations.markings,matrix}

\usepackage{aliascnt}
\usepackage[capitalise]{cleveref}
\newtheorem{definition}{Definition}

\newaliascnt{proposition}{definition}
\newtheorem{proposition}[proposition]{Proposition}
\aliascntresetthe{proposition}
\crefname{proposition}{Proposition}{Propositions}
\Crefname{proposition}{Proposition}{Propositions}

\newaliascnt{corollary}{definition}
\newtheorem{corollary}[corollary]{Corollary}
\aliascntresetthe{corollary}
\crefname{corollary}{Corollary}{Corollaries}
\Crefname{corollary}{Corollary}{Corollaries}

\begin{document}

\def\gA{{\mathcal{A}}}
\def\gB{{\mathcal{B}}}
\def\gC{{\mathcal{C}}}
\def\gD{{\mathcal{D}}}
\def\gE{{\mathcal{E}}}
\def\gF{{\mathcal{F}}}
\def\gG{{\mathcal{G}}}
\def\gH{{\mathcal{H}}}
\def\gI{{\mathcal{I}}}
\def\gJ{{\mathcal{J}}}
\def\gK{{\mathcal{K}}}
\def\gL{{\mathcal{L}}}
\def\gM{{\mathcal{M}}}
\def\gN{{\mathcal{N}}}
\def\gO{{\mathcal{O}}}
\def\gP{{\mathcal{P}}}
\def\gQ{{\mathcal{Q}}}
\def\gR{{\mathcal{R}}}
\def\gS{{\mathcal{S}}}
\def\gT{{\mathcal{T}}}
\def\gU{{\mathcal{U}}}
\def\gV{{\mathcal{V}}}
\def\gW{{\mathcal{W}}}
\def\gX{{\mathcal{X}}}
\def\gY{{\mathcal{Y}}}
\def\gZ{{\mathcal{Z}}}

\def\sA{{\mathbb{A}}}
\def\sB{{\mathbb{B}}}
\def\sC{{\mathbb{C}}}
\def\sD{{\mathbb{D}}}
\def\sE{{\mathbb{E}}}
\def\sF{{\mathbb{F}}}
\def\sG{{\mathbb{G}}}
\def\sH{{\mathbb{H}}}
\def\sI{{\mathbb{I}}}
\def\sJ{{\mathbb{J}}}
\def\sK{{\mathbb{K}}}
\def\sL{{\mathbb{L}}}
\def\sM{{\mathbb{M}}}
\def\sN{{\mathbb{N}}}
\def\sO{{\mathbb{O}}}
\def\sP{{\mathbb{P}}}
\def\sQ{{\mathbb{Q}}}
\def\sR{{\mathbb{R}}}
\def\sS{{\mathbb{S}}}
\def\sT{{\mathbb{T}}}
\def\sU{{\mathbb{U}}}
\def\sV{{\mathbb{V}}}
\def\sW{{\mathbb{W}}}
\def\sX{{\mathbb{X}}}
\def\sY{{\mathbb{Y}}}
\def\sZ{{\mathbb{Z}}}

\newcommand{\Range}{\mathrm{Range}}
\newcommand{\Proc}{\mathrm{Proc}}
\newcommand{\x}{\bm{x}}
\newcommand{\z}{\bm{z}}
\newcommand{\g}{\bm{g}}
\newcommand{\m}{\bm{m}}
\newcommand{\vb}{\bm{v}}
\newcommand{\thetab}{\bm{\theta}}
\newcommand{\phib}{\bm{\phi}}
\newcommand{\Cat}{\mathrm{Cat}}
\newcommand{\pib}{\bm{\pi}}
\newcommand{\pub}{\mathrm{pub}}
\newcommand{\priv}{\mathrm{priv}}
\newcommand{\syn}{\mathrm{syn}}
\newcommand{\mub}{\bm{\mu}}
\newcommand{\sigmab}{\bm{\sigma}}
\newcommand{\diag}{\mathrm{diag}}
\newcommand{\KL}{\mathrm{KL}}
\newcommand{\Mone}{\mathrm{M1}}
\newcommand{\AJS}{\mathrm{AJS}}
\newcommand{\JS}{\mathrm{JS}}
\newcommand{\recon}{\mathrm{recon}}
\newcommand{\gen}{\mathrm{gen}}

\newcommand{\dtarget}{D_{\mathrm{target}}}
\newcommand{\mia}{\textsc{MIA}}
\newcommand{\Pin}{P_\mathrm{in}}
\newcommand{\Pout}{P_\mathrm{out}}
\newcommand{\tpr}{\mathrm{TPR}}
\newcommand{\fpr}{\mathrm{FPR}}
\newcommand{\fnr}{\mathrm{FNR}}
\newcommand{\muin}{\mu^{(\mathrm{in})}}
\newcommand{\muout}{\mu^{(\mathrm{out})}}
\newcommand{\sigin}{\sigma^{(\mathrm{in})}}
\newcommand{\sigout}{\sigma^{(\mathrm{out})}}
\newcommand{\adv}{\mathrm{Adv}}

\title{Training-Free Private Synthesis with Validation:\\A New Frontier for Practical Educational Data Sharing}

\author{Hibiki Ito, Chia-Yu Hsu, Hiroaki Ogata\thanks{H. Ito, C-Y, Hsu and H. Ogata are with Kyoto University, Japan.}}

\markboth{Under review.}%
{Shell \MakeLowercase{\textit{et al.}}: A Sample Article Using IEEEtran.cls for IEEE Journals}


\maketitle

\begin{abstract}
While secondary use of growing real-world data (RWD) in education entails the potential for substantial research opportunities, data sharing is mostly restricted by privacy constraints. Differentially private synthetic data generation (DP-SDG) has gained growing attention as a technical solution. However, since educational RWD is fragmented across disparate platforms and institutions in different formats, the implementation of DP-SDG must be tailored to each dataset, requiring substantial engineering efforts. In addition, such RWD is typically small in sample size and high-dimensional, for which deep learning (DL)-based methods are common and require particular expertise for implementation. The small-sample, high-dimensional setting also makes it infeasible to obtain practically acceptable utility in downstream analyses. As a result, despite theoretical advantages, DP-SDG remains far from a practical solution in the education domain. To address this issue, we propose a more practical two-stage method: (1) training-free, LLM-based DP-SDG is performed for sharing synthetic data and (2) \emph{real-data validation}---researchers submit code to remotely validate results---is conducted on demand. Our simple method is designed to be usable for individual data custodians without extensive expertise in DP-SDG. Moreover, the proposed method can also be adapted to multi-shot synthesis settings where data of different learner cohorts are synthesised on a regular basis. We experimentally evaluate this method in both the one-shot and multi-shot synthesis settings using RWD collected over three years. A case study is also conducted with real researchers. The results demonstrate that the LLM-based DP-SDG shows comparable performance to the DL-based baseline despite significantly reduced engineering costs, and that non-DP validation exhibits measurable but moderate privacy leakage. Nonetheless, in the case study researchers reported that on average only 36\% of synthetic findings are validated on real data. Overall, the paper provides a practical and usable method for sharing RWD in education, while highlighting challenges in risk mitigation and epistemic precision. We discuss implications for future research and practitioners.
\end{abstract}

\begin{IEEEkeywords}
Data sharing, real-world data, privacy protection, differential privacy, synthetic data.
\end{IEEEkeywords}

\section{Introduction}\label{sec:intro}

\IEEEPARstart{O}{ver} the past few decades, digital technologies have been widely adopted in education, leading to the accumulation of real-world data (RWD) \cite{Mahajan2015RWD} on learning, learning environments, and educational practices. By complementing experimental data, RWD provides significant opportunities for advancing research to understand learning and empower learners and practitioners \cite{Pan2024SRLA,Okumura2026RWE,Kuromiya2023RWD}. However, due to the privacy-sensitive nature of such RWD, access is typically restricted to trusted researchers or even not provided to third parties \cite{Fischer2020bigdata}. This, in turn, hampers research opportunities and undermines open science practices, thereby constraining the field's impact \cite{Baker2024open,Dawson2019impact}.

Balancing data sharing for the public interest with robust privacy protection remains a substantial practical challenge. Traditional privacy protection techniques such as $k$-anonymity \cite{Samarati1998kanonymity} and $l$-diversity \cite{Machanavajjhala2006ldiversity} cannot robustly control privacy leakage, carrying the risk of unexpected privacy violations including re-identification \cite{Cohen2022attacks}. In particular, micro-level trace data generated by educational technologies often exhibits high privacy risks \cite{Ito2025detailed}, and such risks influence stakeholders' data sharing decisions, thereby affecting downstream secondary use---known as the third-party access effect (3PAE) \cite{Ito20263PAE}.

Differential privacy (DP) \cite{Dwork2006DP} addresses this issue of traditional anonymisation by offering a provable guarantee for privacy leakage. In particular, differentially private synthetic data generation (DP-SDG) has gained growing attention as a technical solution to sharing sensitive data in a privacy-preserving manner \cite{Gadotti2024review,Lu2019SDG}. SDG typically aims to learn distributional properties of a dataset and generate artificial data that resemble the original dataset. While SDG without DP remains vulnerable to privacy risks such as attribute inference attacks \cite{Annamalai2024attribute}, DP-SDG offers a formal privacy guarantee for each individual.

However, despite its theoretical promise, DP-SDG remains far from a practical solution in the education domain. Since most educational contexts generate small, locally collected RWD that is fragmented across disparate platforms in different formats \cite{Nguyen2024small,Medeiros2025datagov}, such RWD is typically small in sample size. Additionally, analysing learning processes within students requires high-dimensional data such as time series and trace data, rather than cross-sectional tabular data \cite{Saqr2026idio}, requiring deep learning (DL)-based methods for private synthesis to capture temporal patterns \cite{Mao2024DPtime}. Nonetheless, implementing DL-based DP-SDG tailored to each dataset requires substantial engineering effort, which would not generally be manageable for individual data custodians---including K-12 schools, higher education institutions, educational technology companies---or researchers without expertise in DP and DL. Moreover, obtaining practically acceptable utility from generated data becomes infeasible for small and high-dimensional data \cite{MontoyaPerez2024FP,Ullman2011hardness}. With limited fidelity of synthetic data to real data, analytical conclusions derived from synthetic data may easily lead to false discoveries \cite{vanDerLinden2025synthhealth,MontoyaPerez2024FP}.

To address this issue, we propose a more practical two-stage method: 1) training-free, LLM-based DP-SDG is performed for sharing synthetic data and 2) non-DP real-data validation---researchers remotely run validation code and obtain output---is conducted on demand. With this two-stage method, even individual researchers and data custodians may be able to share small, high-dimensional sensitive data with privacy protection while downstream researchers can ensure the validity of their findings through real-data validation.

Moreover, our method can be adapted to multi-shot synthesis settings, an education-specific synthesis setting where similar datasets generated by different learner cohorts are shared on a regular basis (e.g. annually) \cite{Ito2026CAPS}. It is important to adapt private synthesis to a multi-shot setting because simply repeating the same one-shot synthesis is suboptimal. Additionally, multi-shot synthesis supports design-based research (DBR) in learning analytics (LA) by iteratively sharing data \cite{Ito2026CAPS}.

Taken together, the two-stage method offers a highly practical and usable method for practitioners to share sensitive data that would otherwise remain in enclaves, potentially expanding research opportunities such as DBR and promoting open science in the education domain. We experimentally evaluate the proposed method using RWD collected over three years in a lower-secondary school. We also conduct a case study of secondary use with four researchers for studying the real-world utility of the proposed method, which has recently been recognised as a valuable yet scarce approach in the DP community \cite{Rosenblatt2023parity}. Overall, the contribution of this paper is twofold:
\begin{itemize}
    \item We propose a practical two-stage method for sharing RWD in education that can be used in a one-shot synthesis setting and adapted to multi-shot settings.
    \item We evaluate the two-stage method through authentic RWD and a case study of secondary use with real researchers.
\end{itemize}

\section{Related work}\label{sec:related-work}

\subsection{Data sharing in education}\label{sec:data-sharing-education}

Educational technology research, including LA and educational data mining (EDM), has been inherently constrained by privacy issues \cite{Pardo2014ethical}.
Thus, there have been ongoing yet limited efforts for technologically mitigating these issues \cite{Liu2023review,Joksimovic2022privacy}. Among them, pseudonymisation---removing personally identifiable information like names and emails---is the most common approach to data sharing in practice \cite{Hutt2023forgotten}, including LA infrastructures \cite{Flanagana2018infra,Wijerathne2024EREDA} and a MOOC plugin \cite{Torre2020ELAT}. However, pseudonymised educational data remain highly vulnerable to re-identification \cite{Yacobson2021deident,Vatsalan2022risk,Ito2025detailed}. Thus, more advanced anonymisation methods such as $k$-anonymity have also been investigated \cite{Angiuli2016kanon,Kyritsi2019privacy,Stinar2024kanon}. For example, the Open University Learning Analytics Dataset (OULAD) \cite{Kuzilek2017OULAD} and the HarvardX dataset \cite{HarvardX2014data} were released using $k$-anonymity. In addition, multiple tools have been developed to share educational data based on quantified risks, such as the ARX tool \cite{Prasser2020ARX} and the Re-identifier Risk Ready Reckoner (R4) \cite{CSIRO2019R4}.

Nonetheless, despite the prevalence in practice, these traditional methods fail to robustly control privacy risks, being unpredictably vulnerable to inferential privacy attacks such as attribute inference \cite{Fredrikson2014AI} and membership inference attacks (MIA) \cite{Shokri2017MIA}. Although SDG has emerged as a modern solution and has also been applied to educational data \cite{Flanagan2022synth,Khalil2025CTGAN,Vie2022synth,Iloh2025CTGAN,Bautista2021GAN}, it still remains susceptible to unexpected privacy risks without formal guarantees such as DP. Recently, DP-SDG has been applied to educational data \cite{Liu2025fairprivate,Liu2025DPGAN,Zhan2024DPGAN,Kesgin2025FairSYN,Ito2026CAPS}. A concurrent study proposed the cyclic adaptive private synthesis (CAPS) that adapts variational autoencoder (VAE)-based DP-SDG to multi-shot settings \cite{Ito2026CAPS}. While most work assumes large-scale low-dimensional data is available, CAPS delves into more practical educational contexts with small and high-dimensional data \cite{Ito2026CAPS}. Nonetheless, the VAE-based framework is not flexible to changes in data schema and also requires substantial engineering efforts in model development. In this paper, building on this line of research, we aim to develop a more practical method such that individual data custodians and researchers can conduct data sharing while still maintaining more robust privacy protection than traditional anonymisation.

\subsection{Training-free DP-SDG}\label{sec:bg-stage1}

Private synthesis of high-dimensional structured data such as time series and event sequences predominantly involves training neural networks \cite{Mao2024DPtime}. Although there are some statistics-based methods that privately synthesise longitudinal data without DL-based generative models \cite{He2024onlineSDG,Bun2024continual,Gu2025continuous,Wang2025DPTrajPM}, they focus on event-level DP, which we do not adopt in this paper because it can pose uneven privacy guarantees for different individuals and thus is arguably not ethical. Instead, we focus on user-level DP, where the privacy unit is an individual and the same formal privacy is guaranteed universally across individuals.

Some training-free methods with user-level DP are technically applicable to high-dimensional structured data. Private evolution (PE) is a powerful method to generate synthetic images \cite{Lin2023PEimage,Wang2025PEimagetext,Zhang2025PCEvolve,Lin2025PEsim} and text \cite{Xie2024PEtext}. The Private Signed Measure Mechanism (PSMM) is a related method in theory \cite{He2023PSMM}. However, the effectiveness of these methods is limited for structured data \cite{Swanberg2025PEtabular}, and they can also fail to improve or even degrade generated data for small, high-dimensional regime \cite{Gonzalez2025PEconverge}.

\subsection{Real-data validation}\label{sec:bg-stage2}

The remote data analysis, where a researcher submits code to get output without seeing real data, is commonly adopted in trusted research environments (TRE) \cite{Weise2024TRE}. For example, OpenSAFELY, a secondary-use platform for electronic health records in the UK, provides random mock data to researchers to develop analysis and allow for remote execution with output checking \cite{Nab2024OpenSAFELY}. This approach would be suitable for domains like medicine, where researchers have prior knowledge or beliefs about underlying distributions, because most analysis is confirmatory based on researchers' hypotheses. However, educational data is contextualised to local practice and thus typically requires exploratory analyses for hypothesis generation, which is not possible with random data. Validating every single exploratory analysis with output checking poses infeasible operational costs and unnecessary privacy loss.

The US Census Bureau conducted pilot studies of sharing synthetic data while allowing for real-data validation \cite{Drechsler2014synthLB,USCensusBureau2018SIPP}. However, these studies do not use DP for generating synthetic data. As mentioned previously, non-DP synthetic data can be unpredictably vulnerable to privacy attacks \cite{Stadler2022groundhog}.

To the best of our knowledge, our work is the first to combine DP-SDG and non-DP validation for data sharing. This is perhaps because non-DP outputs break the DP guarantee and the privacy risk of the entire data sharing process cannot be analysed in the DP framework. Nonetheless, as we discuss in \cref{sec:two-stage}, focusing on the privacy framework of $f_{n,\sD}$-\emph{worst-case membership inference privacy} ($f_{n,\sD}$-WMIP) \cite{Ito2024thesis}, we show that the entire process of the two-stage method can still account for the privacy loss for each individual.

\section{Two-stage data sharing}\label{sec:two-stage}

We describe the proposed two-stage method and how it is adapted to a multi-shot setting (\cref{fig:proposed-method}).

\subsection{Preliminaries}

To formulate our two-stage method, we first define approximate DP and $f$-DP. We say that datasets $D$ and $D^\prime$ are adjacent if they differ by at most one individual. Throughout, we assume that adjacent datasets have the same number of individuals. The standard approximate DP is defined below.

\begin{definition}[Differential privacy \cite{Dwork2006approxDP}]\label{def:dp}
    An algorithm $\gA$ is $(\varepsilon, \delta)$-differentially private if for all $\gS \subseteq \Range(\gA)$ and for all adjacent datasets $D$ and $D^\prime$:
    \begin{equation}
        \Pr(\gA(D) \in \gS) \leq e^\varepsilon \Pr(\gA(D^\prime) \in \gS) + \delta,
    \end{equation}
    where probabilities are over the randomness in the algorithm $\gA$.
\end{definition}

Next, to define $f$-DP, a variant of DP that generalises the approximate DP, we define trade-off functions based on the false negative rate (FNR) and false positive rate (FPR) of a distinguishability problem between two distributions.

\begin{definition}[Trade-off functions \cite{Dong2022GDP}]
    For probability distributions $P_0$ and $P_1$ in the same space, consider the hypothesis test problem $H_0:P_0\ vs.\ H_1:P_1$. Then the trade-off function $T(P_0, P_1):[0,1] \rightarrow [0,1]$ is defined as
    \begin{equation}
        T(P_0, P_1)(\alpha) = \inf\{\fnr \mid \fpr = \alpha\},
    \end{equation}
    where infimum is taken over all rejection rules.
\end{definition}

Let $\gA(D)$ denote the output distribution of algorithm $\gA$ on dataset $D$. $f$-DP is defined by the trade-off function of distinguishability between $\gA(D)$ and $\gA(D^\prime)$ where $D$ and $D^\prime$ are adjacent.

\begin{definition}[$f$-differential privacy \cite{Dong2022GDP}]\label{def:f-dp}
    Let $f$ be a trade-off function. An algorithm $\gA$ is $f$-differentially private ($f$-DP) if for all adjacent datasets $D$ and $D^\prime$:
    \begin{equation}
        T(\gA(D), \gA(D^\prime)) \geq f.
    \end{equation}
\end{definition}

Here we denote $f_0 \geq f_1$ for functions $f_0$ and $f_1$ defined on $[0, 1]$ if $f_0(x) \geq f_1(x)$ for all $0 \leq x \leq 1$. In particular, let $G_\mu$ be a trade-off function defined as
\begin{equation}
    G_\mu=T(\gN(0,1),\gN(\mu,1)).
\end{equation}
Then Gaussian differential privacy (GDP) is defined as follows.
\begin{definition}[$\mu$-Gaussian differential privacy \cite{Dong2022GDP}]\label{def:gdp}
    An algorithm $\gA$ is $\mu$-Gaussian differentially private ($\mu$-GDP) if it is $G_\mu$-DP, that is, if for all adjacent datasets $D$ and $D^\prime$:
    \begin{equation}
        T(\gA(D), \gA(D^\prime)) \geq G_\mu.
    \end{equation}
\end{definition}

That is, if algorithm $\gA$ satisfies $\mu$-GDP, distinguishing $\gA(D)$ and $\gA(D^\prime)$ is harder than distinguishing $\gN(0,1)$ and $\gN(\mu,1)$ for any adjacent datasets $D$ and $D^\prime$.

Since the DP guarantee does not hold for the two-stage method due to non-DP validation, we instead focus on $f_{n,\sD}$-WMIP. To formulate it, we first define the per-example membership inference (MI) game in the context of real-data validation.

\begin{definition}[Per-example MI game adopted from \cite{Ye2022MIA}]\label{def:mi-game}
    The per-example membership inference game between a data custodian and an adversary proceeds as follows:
    \begin{enumerate}
        \item The adversary submits code (algorithm) $\gA$ to the data custodian for real-data validation.
        \item The custodian samples a dataset $\dtarget \sim \sD^{n-1}$ and is given a fixed target example $x$ (pre-sampled from $\sD$).
        \item The custodian flips an unbiased bit $b\leftarrow\{0,1\}$. If $b=1$ (IN), then the custodian runs the code to get output $Y=\gA(\dtarget \cup \{x\})$. If $b=0$ (OUT), then the custodian runs the code to get output $Y=\gA(\dtarget \cup \{x^\prime\})$ for some $x^\prime \sim \sD$ such that $x \not= x^\prime$. Then the custodian sends $x$ and $Y$ to the adversary.
        \item Given access to $\sD$, the adversary outputs $\mia(x, Y, \sD) \in \{0, 1\}$. If $\mia(x, Y, \sD)=b$, then the adversary wins the game; otherwise, the custodian wins.
    \end{enumerate}
\end{definition}

We evaluate the adversary's success in the above game in terms of the trade-off functions of the following hypothesis test:
\begin{equation}\label{eq:mia-hypothesis-test}
    H_0: x \notin \dtarget \quad vs. \quad H_1: x \in \dtarget.
\end{equation}
That is, the membership of a target individual is identified by rejecting the null hypothesis. Now let us denote
\begin{align}
    \gA(D) \sim \Pin(x, \gA, \sD, n) \quad &\text{if} \ D \sim \sD^n \ \text{with}\ x \in D, \\
    \gA(D) \sim \Pout(x, \gA, \sD, n) \quad &\text{if} \ D \sim \sD^n \ \text{with}\ x \notin D.
\end{align}

\begin{definition}[$f_{n,\sD}$-worst-case membership inference privacy \cite{Ito2024thesis}]\label{def:f-wmip}
    Let $f_{n,\sD}$ be a trade-off function. An algorithm $\gA$ is $f_{n,\sD}$-worst-case membership inference private ($f_{n,\sD}$-WMIP) with respect to the dataset size $n$ and the underlying distribution $\sD$ if for all $x \sim \sD$:
    \begin{equation}
        T(\Pout(x, \gA, \sD, n), \Pin(x, \gA, \sD, n)) \geq f_{n,\sD}.
    \end{equation}
\end{definition}

Similar to GDP, we define Gaussian worst-case membership inference privacy (GWMIP).
\begin{definition}[Gaussian worst-case membership inference privacy \cite{Ito2024thesis}]\label{def:GWMIP}
    An algorithm $\gA$ is $\mu_\sD(n)$-Gaussian worst-case membership inference private ($\mu_\sD(n)$-GWMIP) with respect to the dataset size $n$ and the underlying distribution $\sD$ if it is $G_{\mu_\sD(n)}$-WMIP, that is, if for all $(x, y_x) \sim \sD$:
    \begin{equation}
        T(\Pout(x, \gA, \sD, n), \Pin(x, \gA, \sD, n)) \geq G_{\mu_\sD(n)}.
    \end{equation}
\end{definition}

Note that, while $f$-DP focuses on distinguishability of adjacent datasets, $f_{n,\sD}$-WMIP addresses distinguishability between IN and OUT datasets, weakening the threat model to a more realistic one. Additionally, while DP is a characteristic of an algorithm, WMIP depends not only on an algorithm but also the dataset size $n$ and the underlying distribution $\sD$.

Finally, the following result ensures the existence of the worst-case individual.
\begin{proposition}[Existence of the worst-case example\cite{Ito2024thesis}]\label{prop:closed-bounded}
    Let $\gA$ be an algorithm with domain $\gD$, $\sD$ be a distribution over $\gD$, and $n \in \sN$ be an arbitrary dataset size. Assume $\gD$ is closed and bounded. Then for all $\alpha\in[0,1]$ there exists $(x^*, y_{x^*})\in\gD$ such that for all $(x, y_{x}) \in \gD$
    \begin{multline}
        T(\Pout(x, \gA, \sD, n), \Pin(x, \gA, \sD, n))(\alpha) \\
        \geq T(\Pout(x^*, \gA, \sD, n), \Pin(x^*, \gA, \sD, n))(\alpha).
    \end{multline}
\end{proposition}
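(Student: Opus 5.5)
The plan is to apply the extreme value theorem. Fix $\alpha\in[0,1]$ and define $g_\alpha:\gD\to[0,1]$ by
\begin{equation*}
    g_\alpha(x) = T(\Pout(x, \gA, \sD, n), \Pin(x, \gA, \sD, n))(\alpha).
\end{equation*}
The statement asks for a point $x^*$ at which $g_\alpha$ attains its infimum over $\gD$. Since $\gD$ is closed and bounded in a finite-dimensional space, Heine--Borel makes it compact. It therefore suffices to show that $g_\alpha$ is lower semicontinuous, because a lower semicontinuous function on a compact set attains its minimum. Continuity would also be enough, but I expect only lower semicontinuity to be available in general.

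\textbf{Step 1: the trade-off function as a variational quantity.} I would first rewrite the trade-off function in a form that is easy to bound. By the Neyman--Pearson lemma, the infimum over rejection rules $\phi:\Range(\gA)\to[0,1]$ may be taken over randomised tests, with the constraint $\mathbb{E}_{\Pout}[\phi]\le\alpha$. This gives
\begin{equation*}
    g_\alpha(x) = \inf_{\phi}\{1-\mathbb{E}_{\Pin(x)}[\phi] : \mathbb{E}_{\Pout(x)}[\phi]\le\alpha\}.
\end{equation*}
It is standard that $(P_0,P_1)\mapsto T(P_0,P_1)(\alpha)$ is lower semicontinuous under weak or total-variation convergence of the pair. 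I would prove this via the convex-dual form
\begin{equation*}
    T(P_0,P_1)(\alpha)=\sup_{\lambda\ge0}\Big(\textstyle\int \min(\lambda\,dP_0,\,dP_1)\cdot\text{stuff}-\lambda\alpha\Big).
\end{equation*}
More concretely, I would use the Lagrangian form
\begin{equation*}
    \sup_{\lambda\ge 0}\big[\, 1-\lambda\alpha-\|(P_1-\lambda P_0)_+\|\,\big].
\end{equation*}
Each term inside the supremum is continuous in total variation, and a supremum of continuous functions is lower semicontinuous.

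\textbf{Step 2: continuity of $x\mapsto(\Pout(x),\Pin(x))$.} Here $\Pin(x)$ is the law of $\gA(\dtarget\cup\{x\})$ with $\dtarget\sim\sD^{n-1}$. $\Pout(x)$ is the corresponding law with $x$ replaced by $x'\sim\sD$ conditioned on $x'\neq x$; when $\sD$ has no atoms this equals a fixed mixture independent of $x$. I would assume, or extract from the setting, that the kernel $x\mapsto\gA(D\cup\{x\})$ is continuous in total variation for each fixed $D$. Dominated convergence, integrating over $\dtarget$, then shows that $\Pin(x)$ and $\Pout(x)$ vary continuously in total variation. Composing with Step 1 gives lower semicontinuity of $g_\alpha$.

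\textbf{Step 3: conclude.} I would take a minimising sequence $x_k$ with $g_\alpha(x_k)\to\inf g_\alpha$. By compactness a subsequence converges to some $x^*\in\gD$, using closedness to keep the limit in $\gD$. Lower semicontinuity then gives $g_\alpha(x^*)\le\liminf g_\alpha(x_k)=\inf g_\alpha$, which yields the inequality for every $x$.

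The main obstacle is Step 2: the statement imposes no regularity on $\gA$ or $\sD$. If $\gA$ is arbitrary code, for instance one that is discontinuous in an input record, $g_\alpha$ need not be lower semicontinuous. The infimum might then fail to be attained without further structure. I would first try to handle this by noting that records are finite-precision objects, so $\gD$ is effectively finite and the minimum is trivially attained. Failing that, I would make the continuity assumption on the mechanism explicit as a standing hypothesis, since the proposition is cited from \cite{Ito2024thesis} where such an assumption is presumably in force. An atom of $\sD$ at $x$ also complicates the conditioning $x'\neq x$, and I would treat that case separately.
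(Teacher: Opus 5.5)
There is a genuine gap, but it lies in the statement rather than in your reasoning, and you located it correctly. With no regularity on $\gA$ the proposition is false, so no argument from the stated hypotheses can work. The paper gives no proof, only the citation.

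For a counterexample, take $\gD=[0,1]$ with a single label, $\sD$ uniform and $n=1$. Let $\gA(\{z\})$ output a $\mathrm{Bernoulli}(p(z))$ bit, where $p(z)=z$ on $(0,1)$ and $p(0)=p(1)=\tfrac12$. Since $\sD$ is atomless, $\Pout(x)=\mathrm{Bernoulli}(\tfrac12)$ for every $x$ and $\Pin(x)=\mathrm{Bernoulli}(p(x))$. Neyman--Pearson then gives $T(\Pout(x),\Pin(x))(\tfrac12)=\min\{p(x),1-p(x)\}$. This equals $\min\{x,1-x\}>0$ on $(0,1)$ and $\tfrac12$ at the endpoints, so the infimum $0$ is not attained and no $x^*$ exists for $\alpha=\tfrac12$. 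For general $n$, let the bit have mean $\frac1n\sum_{z\in D}p(z)$; the infimum $\frac12-\frac1{2n}$ is again unattained.

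The proposition is meant to cover arbitrary submitted analysis code, which is often discontinuous in a single record (thresholds, binning, argmax). An explicit continuity hypothesis is therefore the right repair. The finite-precision escape also works, but then compactness is not used at all.

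With such a hypothesis your compactness-plus-lower-semicontinuity argument goes through, but choose the hypothesis with care.

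First, your Step 1 claim that $(P_0,P_1)\mapsto T(P_0,P_1)(\alpha)$ is lower semicontinuous under weak convergence is false. Taking $P_0^k=\delta_{1/k}$ and $P_1^k=\delta_0$ gives $T\equiv0$ along the sequence but $1-\alpha$ in the limit. In total variation the claim holds. Your Lagrangian formula works there; drop the placeholder display with ``stuff''. More simply, if both pairs are $\eta$-close in TV, any test's error rates move by at most $\eta$, so $T(Q_0,Q_1)(\alpha)\ge T(P_0,P_1)(\min\{\alpha+\eta,1\})-\eta$, and trade-off functions are continuous.

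Second, your Step 2 condition (TV-continuity of $x\mapsto\gA(D\cup\{x\})$ for each fixed $D$) is stronger than needed. For deterministic code it forces the output to be locally constant in $x$, which rules out even a sample mean. Assume directly that $x\mapsto\Pin(x)$ is TV-continuous. The randomness of $\dtarget$ often supplies this: for a sample mean with $n\ge2$ and $\sD$ absolutely continuous, $\Pin(x)$ is a translate of a fixed density.

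Third, atoms of $\sD$ need no separate treatment, provided $\sD(\{x\})<1$. Set $Q=\int\Pin(z)\,d\sD(z)$ and $w=\sD(\{x\})$. Then $Q=(1-w)\Pout(x)+w\Pin(x)$. This is a post-processing of $\Pout(x)$ that leaves $\Pin(x)$ fixed: keep the output with probability $1-w$, otherwise redraw from $\Pin(x)$. Hence $T(Q,\Pin(x))\ge T(\Pout(x),\Pin(x))$. Along any $x_k\to x$ with $x_k\neq x$ we have $\sD(\{x_k\})\to0$, so $\Pout(x_k)\to Q$ in TV. It follows that $\liminf_k g_\alpha(x_k)\ge T(Q,\Pin(x))(\alpha)\ge g_\alpha(x)$. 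The jump at an atom goes downward, so lower semicontinuity survives.
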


\begin{figure*}[tb]
  \center
  \includegraphics[width=0.8\textwidth]{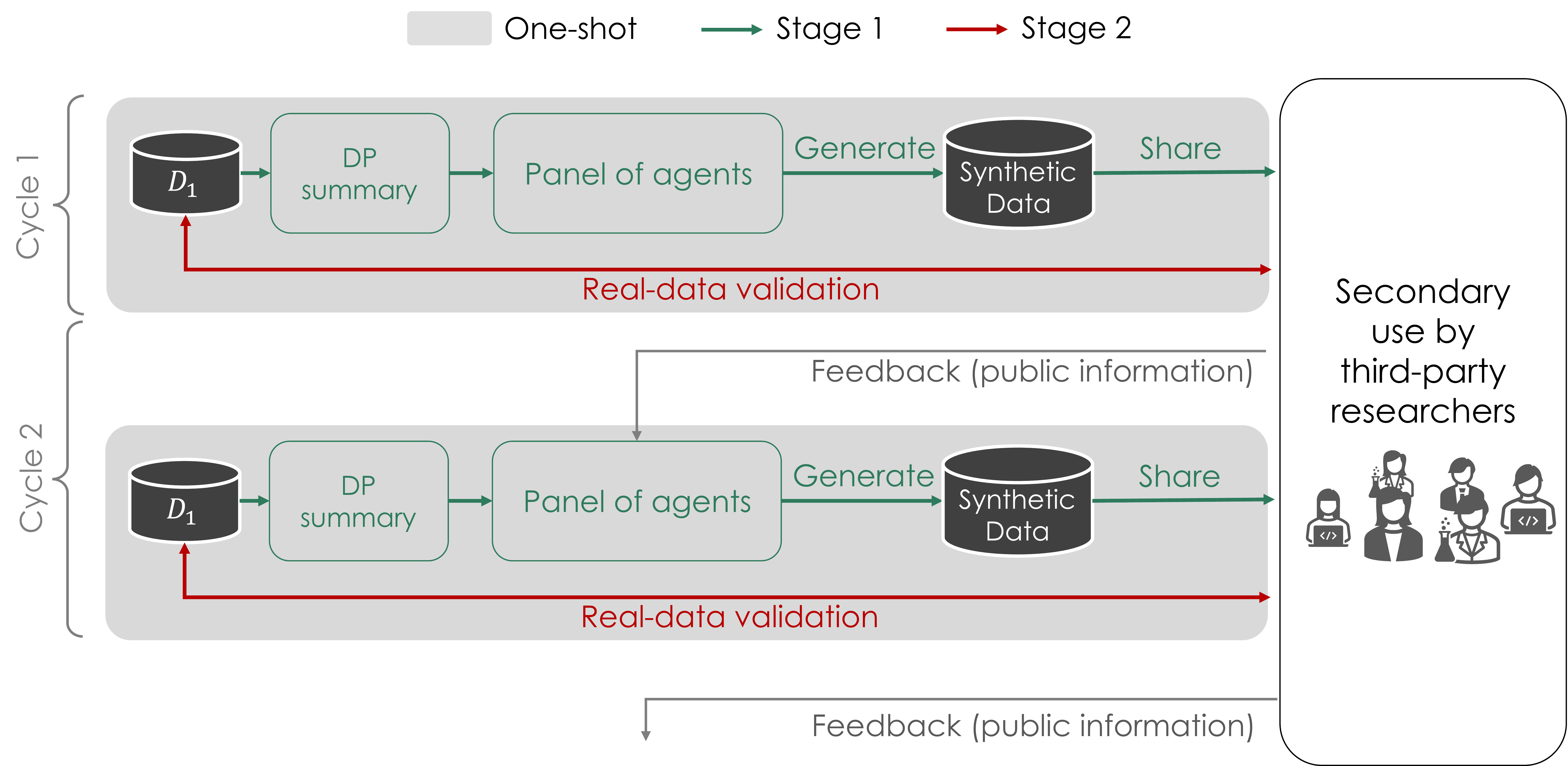}
  \caption{The proposed two-stage method in a multi-shot setting. $D_1,D_2,\dots$ are private datasets of different learner cohorts. The initial cycle corresponds to a standard one-shot setting. From the second cycle, researchers give feedback on how findings on synthetic and real data differed in the prior cycle.}
  \label{fig:proposed-method}
\end{figure*}

\subsection{Stage 1: LLM-based DP-SDG}

A data custodian starts by generating DP summary statistics, like counts and means, of data to be shared. This can be tailored to downstream analyses by third-party researchers. Open-source tools are available for basic DP queries, such as OpenDP \cite{Gaboardi2020OpenDP}. Then, using the DP statistics, LLMs generate synthetic data, which is treated as post-processing of the DP release. As LLM-driven synthetic data generation is an emerging topic \cite{Long2024LLMsynth}, we expect more diverse techniques suitable for various educational RWD are developed in the future. As a baseline method, we give multiple LLMs, a panel of agents, contexts of data collection and the schema and tell them to independently generate a Python script that produces synthetic data using the DP summary. This is a similar approach in public data generation in prior work \cite{Hod2025surrogate,Ito2026CAPS} but with private information. Then we aggregate generated data and randomly sample a desired number of points. Note that Stage 1 alone satisfies DP.

\subsection{Stage 2: Real-data validation}

Third-party researchers who have received the generated data develop analysis. If researchers need real-data validation in order to e.g. publish a paper, they send a request to the data custodian to run their code on real data. Open source tools like PySyft\footnote{\url{https://github.com/OpenMined/PySyft}} are available for this purpose. Output must not include individual raw data and must be safe aggregates such that an adversarial third party cannot trivially reconstruct real data (e.g. by differencing multiple output values). Thus, statistical disclosure control (SDC) is applied, depending on the data sharing policy \cite{Griffiths2024SDC}. There are also open-source tools that help practitioners with SDC, such as statbarn \cite{Green2024statbarn} and QuerySnout \cite{Cretu2022QuerySnout}. We do not apply DP in this stage to ensure validity of findings.

\subsection{Privacy loss}\label{sec:theory}

Though the two-stage data sharing does not provide a DP guarantee, we may analyse privacy loss in terms of WMIP. Let us denote the tensor product of two trade-off functions $f=T(P_0,P_1)$ and $g=T(Q_0,Q_1)$ by
\begin{equation}
    f \otimes g = T(P_0\times Q_0, P_1\times Q_1).
\end{equation}
Then we have the following.

\begin{proposition}
  If Stage 1 satisfies $f$-DP and Stage 2 satisfies $g_{n,\sD}$-WMIP, then the two-stage data sharing satisfies $f\otimes g_{n,\sD}$-WMIP.
\end{proposition}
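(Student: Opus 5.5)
Here $\otimes$ is the tensor product just defined. The plan is to mimic the composition theorem for $f$-DP of Dong et al., adapted to the IN/OUT setting of WMIP. The basic idea is to condition on the realised dataset pair and reduce to a product of two tests.

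Fix a target $x\sim\sD$. Write the two-stage mechanism as $\gM(D)=(\gA_1(D),\gA_2(D))$, where $\gA_1$ is the DP synthesiser and $\gA_2$ is the validation code run on the same real data $D$. We allow $\gA_2$ to be chosen adaptively after seeing the synthetic output, so more precisely $\gA_2=\gA_2^{y_1}$. The goal is to show
\begin{equation*}
T(\Pout(x,\gM,\sD,n),\Pin(x,\gM,\sD,n))\geq f\otimes g_{n,\sD}.
\end{equation*}

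\textbf{Step 1 (coupling).} Couple the OUT and IN experiments through the construction of the MI game: draw $\dtarget\sim\sD^{n-1}$ and $x'\sim\sD$, and set $D_0=\dtarget\cup\{x'\}$ and $D_1=\dtarget\cup\{x\}$. Then $D_0$ and $D_1$ are adjacent. The first coordinate therefore satisfies $T(\gA_1(D_0),\gA_1(D_1))\geq f$ for \emph{every} realisation of $(\dtarget,x')$. The second coordinate satisfies the $g_{n,\sD}$ bound only after marginalising over $(\dtarget,x')$, since that is the content of WMIP.

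\textbf{Step 2 (product structure).} Show that the joint laws are at least as hard to distinguish as a product pair $P_0\times Q_0$ versus $P_1\times Q_1$. Here $T(P_0,P_1)=f$ (or is dominated by it), and $Q_b$ are the WMIP laws $\Pout,\Pin$ of Stage 2. The cleanest route uses the representation from Dong et al.: an $f$-DP mechanism on adjacent inputs can be simulated as a post-processing of a canonical pair $(P_0,P_1)$ with $T(P_0,P_1)=f$. That is, there is a randomised map $\Proc_{D_0,D_1}$ with $\gA_1(D_b)\overset{d}{=}\Proc(Z_b)$, $Z_b\sim P_b$. Using this, the joint output in world $b$ becomes a post-processing of $(Z_b,\gA_2(D_b))$, where $Z_b$ is independent of the data randomness. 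The OUT and IN joint laws are then post-processings of $P_0\times Q_0$ and $P_1\times Q_1$ respectively. Since trade-off functions can only increase under post-processing, this gives the bound $f\otimes g_{n,\sD}$. Adaptivity of $\gA_2$ on $y_1$ would then be handled as in the $f$-DP adaptive composition proof: write the likelihood-ratio test and integrate over $y_1$, using that $g_{n,\sD}$ bounds each adaptively chosen $\gA_2^{y_1}$.

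\textbf{Main obstacle.} The hardest part is Step 2. The post-processing map $\Proc_{D_0,D_1}$ depends on the realised datasets, so it is correlated with the Stage 2 output. The two coordinates are therefore not genuinely independent given the membership bit, and a plain product bound need not follow. I would first try to prove the inequality conditionally: for each fixed $(\dtarget,x')$, Neyman--Pearson shows that the optimal test's FNR at level $\alpha$ is at least that of the product. The catch is that the Stage 2 conditional pair is then just a point mass pair, which is too strong. If this fails, I would instead show the weaker claim that it suffices to lower-bound every test via the convexity and symmetry properties of trade-off functions together with the mixture inequality. If that is also insufficient, I would add an explicit assumption that the Stage 1 noise is independent of the data beyond the DP channel (e.g. Gaussian-mechanism summaries). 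Under that assumption the canonical representation is data-independent and the product argument goes through directly.
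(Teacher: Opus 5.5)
\textbf{There is a genuine gap in Step 2, and the obstacle you flag is real, not an artefact of your technique.}

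In world $b$ the joint output has law $\int \gA_1(D_b)\times\gA_2(D_b)\,d\pi(\dtarget,x')$, which is a mixture of products. The bound $f\otimes g_{n,\sD}$, by contrast, concerns a product of mixtures. The guarantee $g_{n,\sD}$ controls only the $\dtarget$-averaged law of $Y_2$, and $Y_1$ can reveal exactly the part of $\dtarget$ that this averaging relies on as noise. Conditioning, convexity/mixture arguments and your last fallback all fail for this reason. In particular, for a Gaussian mechanism $q(D)+\sigma\xi$, the simulator from the canonical pair still needs the offset $q(D_0)$, which is a function of $\dtarget$.

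In fact, with both stages run on the same $D$ and nothing else assumed, the bound is false. Here is a counterexample.
\begin{itemize}
\item Records are $z=(a,b)$ with $a\in\{0,M\}$, $\Pr(a=M)=p$, and $b\sim\mathrm{Unif}[-1,1]$ independent of $a$.
\item Stage 1 releases $\sum_{z\in D}b_z+\sigma\xi$, which is $(2/\sigma)$-GDP.
\item Stage 2 releases $\sum_{z\in D}(a_z+b_z)$. Its OUT and IN laws are (mixtures of) shifts by at most $M+2$ of the law of $\sum_{\dtarget}(a+b)$. Their total variation is therefore at most $\epsilon=(M+2)\sqrt{3/(n-1)}$, so Stage 2 is $g$-WMIP with $g(\alpha)=\max(0,1-\alpha-\epsilon)$.
\item Averaging any joint test over $Y_2$ shows $(f\otimes g)(\alpha)\geq f(\alpha)-\epsilon$.
\item However, $Y_2-Y_1=\sum_{z\in D}a_z-\sigma\xi$. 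Take $M=\sigma=10$, $n=10^6$, $p=10^{-8}$ and a target with $a_x=M$. The test $Y_2-Y_1>5$ has FPR at most $0.32$ and FNR at most $0.31$.
\item Meanwhile $(f\otimes g)(\alpha)\geq G_{0.2}(\alpha)-0.021>0.58$ for all $\alpha\leq 0.32$, so the claimed bound is violated.
\end{itemize}

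Your adaptive step has the same defect. Given $Y_1=y_1$, $\dtarget$ is no longer distributed as $\sD^{n-1}$, so the WMIP of $\gA_2^{y_1}$ does not control the conditional law.

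\textbf{What is missing is an assumption on the joint law, not on Stage 1's noise.} You need that, given the membership bit, $Y_1$ and $Y_2$ are independent (and non-adaptive). That is, the OUT and IN laws are $P^{(1)}_{\mathrm{out}}\times P^{(2)}_{\mathrm{out}}$ and $P^{(1)}_{\mathrm{in}}\times P^{(2)}_{\mathrm{in}}$. Under this assumption you need no canonical representation:
\begin{itemize}
\item Your Step 1 coupling plus convexity of $f$ (a mixture of pairs each with trade-off at least $f$ has trade-off at least $f$) gives $T(P^{(1)}_{\mathrm{out}},P^{(1)}_{\mathrm{in}})\geq f$. This is the statement that $f$-DP implies $f$-WMIP.
\item Then $T(P^{(1)}_{\mathrm{out}}\times P^{(2)}_{\mathrm{out}},P^{(1)}_{\mathrm{in}}\times P^{(2)}_{\mathrm{in}})=T(P^{(1)}_{\mathrm{out}},P^{(1)}_{\mathrm{in}})\otimes T(P^{(2)}_{\mathrm{out}},P^{(2)}_{\mathrm{in}})\geq f\otimes g_{n,\sD}$ by monotonicity of $\otimes$.
\end{itemize}

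This is the paper's route. It cites Theorem~5.2 of the thesis for $f$-DP $\Rightarrow$ $f$-WMIP and Proposition~5.10 for WMIP composition. The product form of that composition is exactly where this independence enters; the paper makes the same assumption explicitly when it multiplies per-request likelihoods in its risk evaluation. So state the independence hypothesis explicitly and drop the canonical-representation machinery.
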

\begin{proof}
  By Theorem 5.2 of \cite{Ito2024thesis}, Stage 1 satisfies $f$-WMIP. The conclusion follows from the composition theorem of WMIP (Proposition 5.10 of \cite{Ito2024thesis}).
\end{proof}

In particular, we focus on GWMIP \cite{Ito2024thesis} for reporting privacy loss by a single parameter. The below corollary follows from the composition theorem of GWMIP \cite{Ito2024thesis}.

\begin{corollary}\label{cor:composition}
  If Stage 1 satisfies $\mu$-GDP and Stage 2 satisfies $\nu_\sD(n)$-GWMIP, then the two-stage data sharing satisfies $\sqrt{\mu^2 + \nu_\sD(n)^2}$-GWMIP.
\end{corollary}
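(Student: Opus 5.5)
The plan is to specialise the preceding proposition with $f = G_\mu$ and $g_{n,\sD} = G_{\nu_\sD(n)}$, and then compute the tensor product of the two Gaussian trade-off functions.

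First, if Stage 1 satisfies $\mu$-GDP, it is $G_\mu$-DP by \cref{def:gdp}. If Stage 2 satisfies $\nu_\sD(n)$-GWMIP, it is $G_{\nu_\sD(n)}$-WMIP by \cref{def:GWMIP}. The preceding proposition then shows that the two-stage procedure is $G_\mu \otimes G_{\nu_\sD(n)}$-WMIP. In more detail: Theorem 5.2 of \cite{Ito2024thesis} turns $G_\mu$-DP into $G_\mu$-WMIP, and the WMIP composition theorem combines this with the Stage 2 guarantee.

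Second, we identify the tensor product. By definition,
\begin{equation*}
G_\mu \otimes G_{\nu} = T\bigl(\gN(0,1)\times\gN(0,1),\ \gN(\mu,1)\times\gN(\nu,1)\bigr),
\end{equation*}
which compares $\gN(0,I_2)$ with $\gN((\mu,\nu),I_2)$. A trade-off function is invariant under applying the same bijection to both distributions. We therefore apply an orthogonal rotation of $\sR^2$ that maps $(\mu,\nu)$ to $(\sqrt{\mu^2+\nu^2},0)$; it leaves $\gN(0,I_2)$ unchanged. The second coordinate then has the same distribution $\gN(0,1)$ under both hypotheses and is independent of the first. It is pure noise, so it can be discarded without changing the trade-off function. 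This follows from Neyman--Pearson, since the likelihood ratio depends only on the first coordinate. The result is $G_{\sqrt{\mu^2+\nu^2}}$, which is exactly the Gaussian composition identity of \cite{Dong2022GDP}.

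Finally, we substitute $\nu = \nu_\sD(n)$. This gives $G_{\sqrt{\mu^2+\nu_\sD(n)^2}}$-WMIP, which is the claimed $\sqrt{\mu^2+\nu_\sD(n)^2}$-GWMIP by \cref{def:GWMIP}.

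The main obstacle is making sure the composition proposition applies in this form. Stage 2 adaptively receives the Stage 1 synthetic data, because researchers develop their code on it, so the WMIP composition must cover adaptive composition. The product form $P_0\times Q_0$ in the definition of $\otimes$ hides this issue. I would rely on Proposition 5.10 of \cite{Ito2024thesis} being stated for adaptive composition, as the preceding proposition implicitly assumes. If it is not, I would argue that the $\nu_\sD(n)$-GWMIP bound holds uniformly for every submitted code, and hence for any code chosen as a function of the Stage 1 output, and then adapt the adaptive-composition argument of \cite{Dong2022GDP}.
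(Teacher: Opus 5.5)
Your main line is the paper's own proof. The paper's one-line justification cites the GWMIP composition theorem of \cite{Ito2024thesis}, which amounts to the preceding Proposition with $f=G_\mu$ and $g_{n,\sD}=G_{\nu_\sD(n)}$, followed by $G_\mu\otimes G_\nu=G_{\sqrt{\mu^2+\nu^2}}$; your rotation and Neyman--Pearson derivation of that identity is correct.

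The weak point is the one you flag at the end, but it is not only adaptivity, and your fallback would not repair it. Even if the Stage 2 code is fixed in advance, both stages are computed on the \emph{same} $D$. The IN and OUT laws of $(Y_1,Y_2)$ are therefore mixtures over the shared $\dtarget$, not the products $P_0\times Q_0$ and $P_1\times Q_1$. The product form is the actual joint law only when the two outputs are independent given membership.

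The adaptive-composition argument of \cite{Dong2022GDP} does not transfer either. It fixes one pair of adjacent datasets and uses fresh mechanism randomness. Here, the randomness that makes Stage 2 look private is the sampling of $\dtarget$ itself, and a DP Stage 1 can reveal part of it. A $\nu_\sD(n)$ bound on the unconditional IN/OUT laws of each code, however uniform over codes, says nothing about the laws conditional on $Y_1$.

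In fact the claimed bound can fail outright, even non-adaptively. Let each record be a continuous id together with a fair bit, let $C(D)$ count the ones in $D$, and fix a target $x_0$.
\begin{itemize}
\item Stage 1 releases $Y_1=C(D)+\xi$ with $\xi\sim\gN(0,\sigma^2)$, which is $(1/\sigma)$-GDP.
\item Stage 2 releases $Y_2=\bigl(C(D)+\tfrac m2\mathbf{1}[x_0\in D]\bigr)\bmod m$ for an even $m\ll\sqrt n$.
\end{itemize}
Since $C(\dtarget)\bmod m$ is within $e^{-\Omega(n/m^2)}$ of uniform, Stage 2's IN/OUT likelihood ratio is uniformly close to $1$ for every target. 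Stage 2 alone is therefore $\nu$-GWMIP with $\nu\approx 0$. Yet $(Y_2-Y_1)\bmod m$ is $-\xi$ wrapped onto the circle under OUT and $\tfrac m2-\xi$ wrapped under IN. With $m=20$ and $\sigma=2$, the pair separates IN from OUT roughly as well as $G_5$, far beyond the claimed $\sqrt{\mu^2+\nu^2}\approx 0.5$.

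Your argument therefore stands or falls with the paper's appeal to Proposition~5.10 of \cite{Ito2024thesis}. It needs either that result to cover outputs computed on a common dataset, which this example shows is false in general, or an explicit conditional-independence assumption, such as running the stages on independently drawn data. Neither your primary route nor your fallback supplies this.
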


Some important remarks are as follows. Unlike DP, WMIP cannot be enforced by a mechanism. Thus, WMIP is necessarily empirical. Nonetheless, we still focus on WMIP because we can analyse the privacy loss of the entire two-stage data sharing, not each stage separately.

If the two-stage data sharing satisfies $\mu$-GWMIP with small $\mu$, the risk of membership inference attacks (MIA) defined in the MI game (\cref{def:mi-game}) becomes limited. That is, each individual's participation in data sharing is protected from being inferred by third parties. Success of MIA can be significant privacy violation when the learner group has a sensitive attribute such as having learning disabilities. Though the MI game has a strong assumption that the adversary has access to the target data $x$, protection against MIA with this strong assumption implies protection against attribute inference with weaker assumptions \cite{Yeom2018privacyrisk}. In an attribute inference attack, an adversary attempts to infer a sensitive attribute (e.g. exam scores or certain learning behaviour) using auxiliary information.

However, WMIP is separable from resilience against reconstruction attacks \cite{Salem2023privacygames}. This is a limitation of the two-stage method involving non-DP real-data validation and what data custodians must take into account. While success of a reconstruction attack does not imply re-identification, part of real data can be reconstructed, potentially violating a data sharing policy. In other words, while individual privacy remains protected, dataset-level properties can possibly be disclosed to third parties unwittingly.

\subsection{Adapting to multi-shot settings}

When the two-stage method is adapted to multi-shot setting, an important additional step is that researchers give the data custodian feedback on how synthetic findings differ from the output of real-data validation. This information is considered public because both synthetic data and validation output may be published. Then the feedback is provided to the panel of LLM agents to improve synthetic data generation in the next cycle. By the post-processing property of $f$-WMIP \cite{Ito2024thesis}, the privacy loss of each cycle remains the same when adapted to multi-shot settings.

In our implementation, an LLM first summarises submitted code, a corresponding research question and feedback on how results on synthetic and real data differ into concise natural language. Then, another LLM converts the summary into suggestions for improving the subsequent generation cycle, such as ``synthetic data should better match real-world heavy-tail behaviour.''

\section{Evaluation method}\label{sec:method}

While the proposed method significantly reduces engineering costs, the utility of synthetic data generated in the Stage 1 remains unclear. In addition, the Stage 2 introduces additional privacy loss that is not guaranteed to be minimal. Thus, we set the following guiding research questions for evaluation\footnote{Code and prompts used for evaluation are available at \url{https://anonymous.4open.science/r/anonymous-2stage-sharing}.}:
\begin{description}
  \item[RQ1] What is the utility of synthetic data generated by the proposed LLM-based method (Stage 1) compared to DL-based one in the one-shot and the multi-shot settings?
  \item[RQ2] What is the empirical privacy risk of the entire two-stage data sharing?
\end{description}

\subsection{Data and contexts}\label{sec:data}

We focus on learning habits research (a.k.a habits mining \cite{Hsu2023habitsmining}) as a representative topic of secondary use of RWD in education. Empirical evidence suggests that the formation of learning habits is significantly correlated with learning outcomes such as academic achievement \cite{ShirvaniBoroujeni2019MOOChabits,Magulod2019habits,Ricker2020habits} and productivity \cite{Hsu2024productivity}. Furthermore, learning habits like daily routines are privacy-sensitive, and individual trajectories can be re-identified by inspecting longitudinal patterns unless not carefully protected \cite{Vatsalan2022risk,Ito2025detailed}. Recent work also suggests that learning habits analyses could be sensitive to opt-outs incurred by privacy concerns \cite{Ito20263PAE}. Thus, protecting individual privacy about learning habits while enabling secondary use for understanding and improving learning for all is of paramount importance.

We use RWD collected through the BookRoll system \cite{Ogata2017BR}, an e-book reader deployed at a lower-secondary school, over three years from 2022 to 2024. The system is used for mathematics studying in and out of classes. Learning materials and exercises are distributed to students as PDFs through BookRoll, and students interact with them. BookRoll collects interaction logs---such as opening/closing materials, highlighting text, adding handwritten memo---in xAPI format and stores them in a learning record store (LRS).

For each year we obtained data of the first semester (17 weeks) of 120 7th-grade students and their end-of-semester exam scores of mathematics. Then, we estimated sessions using the method of \cite{Hsu2023chronotypes} and aggregated time in minutes spent on BookRoll for four time windows \cite{Ricker2020habits}: morning (05:00-11:59), afternoon (12:00-16:59), evening (17:00-23:59) and overnight (00:00-04:59). The exam scores are evenly binned into three classes: low, average and high. That is, for each student we have $17 \text{ weeks} \times 4 \text{ time windows} = 68$ features and one outcome label, which is fairly high-dimensional data considering the sample size $n=120$.

In the case study, we share three synthetic datasets generated by the proposed LLM-based method with four researchers in the chronological order, emulating a three-year secondary-use project. That is, we first generate and share 2022 synthetic data (Stage 1), and researchers locally analyse and submit real-data validation requests (Stage 2). Then, we generate 2023 synthetic data based on the approved requests for 2022 data and share with researchers (Stage 1), and researchers conduct real-data validation similarly (Stage 2). We repeat this for 2024 data. Informed consents for research-purpose data collection are obtained from both the secondary-school students and researcher participants.

\subsection{One-shot synthesis}

\subsubsection{DL baseline}

For the baseline, we use the semi-supervised generative model of Kingma et al.\ \cite{Kingma2014SSL}, which is the one-shot part of the CAPS framework. First, we preprocess the data by normalising each time window to $[0,1]$ and then applying $\log(1+x)$ for stability. Then, a larger VAE, called M1, is pretrained with public data, and a smaller conditional VAE, called M2, is trained with private data on top of M1. See Appendix \ref{sec:model-architecture} for details of the model architecture and training.

For pretraining M1, we use LLM-generated data as surrogate public data \cite{Hod2025surrogate}. We use \verb|gpt-5.1|, \verb|gemini-3.0-pro| and \verb|claude-opus-4.5| for this task. A prompt shown in \cref{fig:llm-prompt} is given to each model, generating 20,000 data points for each academic achievement class (180,000 in total), and 100,000 points are sampled from the pool of all generated points. We use 1D convolutional layers for the encoder and decoder of M1 with a Gaussian latent variable $z_1$. Since the data is zero-inflated, the decoder predicts (i) a Bernoulli gate for exact zeros and (ii) the parameters of a Gamma distribution for positive values at each feature-time location.

For training M2, semi-private semi-supervised learning (SPSSL) with the Adam optimiser \cite{Ito2026CAPS} is used. Synthetic data is generated from this stacked M1+M2 model for sharing with third-party researchers. M2 consists of a linear classifier that predicts the label $y$ of a latent $z_1$ encoded by M1 and a conditional VAE that learns $p(z_1 \mid z_2, y)$ with a Gaussian latent $z_2$. Using the Opacus library \cite{Yousefpour2021Opacus}, we implement DP-Adam \cite{Abadi2016DPSGD,Kingma2015Adam} by accounting privacy with R\`enyi DP (RDP) \cite{Mironov2017RDP}. Throughout, We set $\varepsilon=1.0$ and $\delta=10^{-3}$. 

For both M1 and M2, hyperparameter optimisation (HPO) is performed by the Tree-structured Parzen Estimator (TPE) sampler \cite{Bergstra2011TPE} with 20 trials using the Optuna library \cite{Akiba2019Optuna}. For M1, LLM-generated data is used for HPO as surrogate public data \cite{Hod2025surrogate}.

\subsubsection{Proposed two-stage method}

In Stage 1, we first calculate the average proportion of zeros for each time window over the 17 weeks to reflect zero-inflation. We add noise to them by the Gaussian mechanism \cite{Dwork2006DP,Balle2018Gaussian} so the output satisfies DP with $\varepsilon=1.0$ and $\delta=10^{-3}$. Then, we give the prompt in \cref{fig:llm-prompt} with the DP summary to the three LLMs. The LLMs independently produce 1000 points, and we randomly choose 120 from the pooled sample for sharing with researchers. Similarly to prior work of PE \cite{Lin2023PEimage}, we generate synthetic data for each label and combine them at the end. 

In Stage 2, researchers submit validation requests containing code, a corresponding research question and the description of output on synthetic data, and we review them before running on real data so that the code does not output individual-level data or any information that allows for trivially inferring individual data.

\subsection{Multi-shot synthesis}

\subsubsection{DL baseline}

After sharing 2022 data with the above one-shot setting (Cycle 1), we update M1 using the generative replay method \cite{Shin2017replay}. We sample 5000 synthetic data points from both M1 and M1+M2 stacked model, and train M1 on this data. After that, we train a fresh M2 on top of the updated M1 using 2023 data. Then synthetic data is generated from this stacked model for sharing (Cycle 2). We repeat the same for year 2024 (Cycle 3).

\subsubsection{Proposed two-stage method}

For the two-stage data sharing, Stage 1 slightly differs between the one-shot and multi-shot settings: we additionally give suggestions to improve synthesis based on researchers' feedback in the previous cycle. Stage 2 remains the same.

\subsection{Utility metrics}\label{sec:method-utility}

As a fidelity metric, we use the average Jensen-Shannon (AJS) divergence between synthetic and real data \cite{Ito2026CAPS,Stenger2024eval}. For each time series $x$ of four time windows, let $h(x)$ be a $4\cdot 6=24$ dimensional vector containing the median, mean, standard deviation, variance, maximum value and root mean square of each time window. Then the AJS divergence between a real dataset $D_t$ and a synthetic dataset $D_t^\syn$ for cycle $t$ is given as
\begin{align}
    \AJS(D_t, D_t^\syn) =& \frac{1}{3} \sum_{c=1}^3 \left( \frac{1}{24} \sum_{h=1}^{24} \JS (\widehat{P}_{t,c}^{(d)}, \widehat{Q}_{t,c}^{(d)} ) \right) \\\label{eq:ajs}
    \widehat{P}_{t,c}^{(d)} =& \left\{h_d(x) \mid (x,y)\in D_t, y=c\right\}, \\
    \widehat{Q}_{t,c}^{(d)} =& \left\{h_d(x) \mid (x,y)\in D_t^{\mathrm{syn}}, y=c\right\}.
\end{align}
where $\JS$ denotes Jensen-Shannon divergence between two empirical distributions, $h_d(x)$ is the $d$-th dimension of the vector $h(x)$ and $c=1,2,3$ are the academic achievement labels.

Additionally, inspired by the \emph{epistemic parity} metric introduced by Rosenblatt et al.\ \cite{Rosenblatt2023parity}, we define \emph{epistemic precision} (EPrec) as subjective evaluation by researchers about how much of synthetic findings are validated on real data. While epistemic parity is defined as an evaluation metric of an SDG algorithm with respect to a specific dataset and a specific finding, we define EPrec as a metric to evaluate an SDG algorithm with respect to a dataset, aggregating across findings. This is because we wish to evaluate how much the algorithm prevents false discoveries on a dataset. In the case study, we shared synthetic data generated by the proposed LLM-based method and asked researchers to report the EPrec of each validation request. We adopted three-scale reporting to reduce sensitivity: $\textrm{EPrec} = 0.0$ indicates that none of the findings are validated on real data, $\textrm{EPrec} = 0.5$ indicates the findings are partially validated, and $\textrm{EPrec} = 1.0$ indicates most of the findings are validated.

\begin{figure*}[t]
    \centering
    \begin{tcolorbox}[
        boxrule=0.8pt,
        arc=2mm,
        left=3mm,right=3mm,top=2mm,bottom=2mm,
        width=0.95\linewidth
    ]
    \footnotesize\ttfamily
    You are a professional synthetic-data generator.\\
    TASK\\
    Write a Python script to generate a realistic dataset of \{\{label\}\}-achieving lower-secondary-school students' learning behaviour during 17 weeks of mathematics study.\\
    DATA-SHAPE\\ 
    - N students\\  
    - 17 weeks (week = 1...17)\\
    - 4 daily attributes: overnight (00:00-04:59), morning (5:00-11:59), afternoon (12:00-16:59), evening (17:00-23:59)\\
    - Minutes spent on studying per week per attribute: integer values 0-300 (overnight), 0-420 (morning), 0-300 (afternoon) and 0-420 (evening)\\
    - Data shape: (N, 4, 17)\\
    STATISTICS\\
    - Include a realistic number of edge-case students\\
    - Assume no holidays, breaks or exam periods\\
    - Ensure a realistic distribution that reflects 12 to 13 year-old {label}-achieving students' behaviour\\
    - Data is derived from interaction logs of an e-book system, where students interact with learning materials and the system was used in classrooms and at home\\
    - Overall zero inflation: the proportions of zero values are shown below.\\
    $$\texttt{\{\{zero props\}\}}$$
    OUTPUT\\
    - A Python script to generate the dataset. The script should generate a numpy array with shape (N, 4, 17) and save it to \verb|<path_to_save_file>| with \verb|<random_seed>| by running \verb|python generate.py --save_path <path_to_save_file> --num_samples N --seed <random_seed>|.\\
    DELIVERY RULES\\
    - Do not print any explanatory text, previews, or stats
    \end{tcolorbox}
    \caption{Prompt template provided to LLMs to generate synthetic data. \{\{label\}\} indicates academic achievement class (high, average, low) and \{\{zero props\}\} are calculated from real data with the Gaussian mechanism. When generating surrogate public data for DL baseline, zero proportions are not given. When the two-stage method is adapted to CAPS, suggestions to improve synthesis from the previous cycle are appended to the prompt.}
    \label{fig:llm-prompt}
\end{figure*}

\subsection{Risk metrics}\label{sec:method-risk}

We evaluate privacy risks of the two-stage data sharing by the MI game (\cref{def:mi-game}). We first manually reformat submitted code $\gA_r$ of $r$-th request so the code outputs a $d_r$-dimensional real vector while containing the same information. Then we resample the real data with replacement to generate 64 IN and 64 OUT \emph{shadow} datasets, containing and not containing the target student $x$. By running the same code $\gA_r$ on the shadow datasets, we obtain empirical distributions of $\Pin(x, \gA_r, \sD, n)$ and $\Pout(x, \gA_r, \sD, n)$. By assuming that these distributions are Gaussians, we perform likelihood ratio tests, similar to the likelihood ratio attack (LiRA) in machine learning \cite{Carlini2022LiRA}. This is a strong attack due to the Neyman-Pearson lemma \cite{Neyman1933NP}, which states the likelihood ratio test is the optimal hypothesis test for a fixed FPR. Assuming that the outputs within and across requests are independent, we have the likelihood ratio for an attacker who has access to $Y_1,\dots,Y_R$ given as
\begin{equation}
  \Lambda_{1:R} := \frac{\prod_{r=1}^R\prod_{i=1}^{d_r} \gN(Y_r^{(i)}; \muin_{r,i}, {\sigin_{r,i}}^2)}{\prod_{r=1}^R\prod_{i=1}^{d_r} \gN(Y_r^{(i)}; \muout_{r,i}, {\sigout_{r,i}}^2)},
\end{equation}
where $Y_r^{(i)}$ is the $i$-th dimension of the output $Y_r = \gA_r(\dtarget)$ and $\gN(\cdot;\mu,\sigma^2)$ denotes a normal pdf. Then, we have
\begin{align}
  \fnr_{1:R}(x) &= \Pr_{\dtarget \sim \sD^n}(\Lambda_{1:R} < \gamma \mid x \in \dtarget) \\
  \fpr_{1:R}(x) &= \Pr_{\dtarget \sim \sD^n}(\Lambda_{1:R} \geq \gamma \mid x \notin \dtarget)
\end{align}
To estimate empirical $\nu_{n,\sD}$-GWMIP for real-data validation, we pick computed FPRs between $2/64$ and $1-2/64$ to avoid unstable edge points, fit $\nu_{n,\sD}$ to each FPR and then report the median as the estimated $\nu_{n,\sD}$. Subsequently, empirical $\nu_{n,\sD}$ is calculated by \cref{cor:composition}. We also report the regret $\delta$ that indicate the fit of $\nu_{n,\sD}$-GWMIP to computed trade-off functions \cite{Kaissis2024regret}.

Additionally, since trade-off functions do not necessarily fit GWMIP, we also report the MIA advantage for the worst-case example $x$ \cite{Yeom2018privacyrisk}:
\begin{multline}
  \adv_{1:R}(x, \gA, n, \sD) \\ 
  = 2 \Pr_{\dtarget \sim \sD^n}(\Lambda_{1:R} \geq \gamma \mid x \in \dtarget) - 1
\end{multline}

\section{Evaluation results}

\subsection{RQ1: Utility of synthetic data in one-shot and multi-shot settings}

We first analyse the utility of the proposed LLM-based DP-SDG in the one-shot setting. \cref{fig:one-shot} shows the AJS for the DL-based baseline method and the proposed method for different datasets. As with other plots, confidence intervals are computed using 1000 bootstrap resamples by running the experiment with five different random seeds. The proposed LLM-based method shows comparable performance across datasets with subtle differences. This indicates that the proposed method operationalises DP-SDG for each educational dataset with comparable utility while significantly reducing the cost of engineering.

\begin{figure}
    \includegraphics[width=0.48\textwidth]{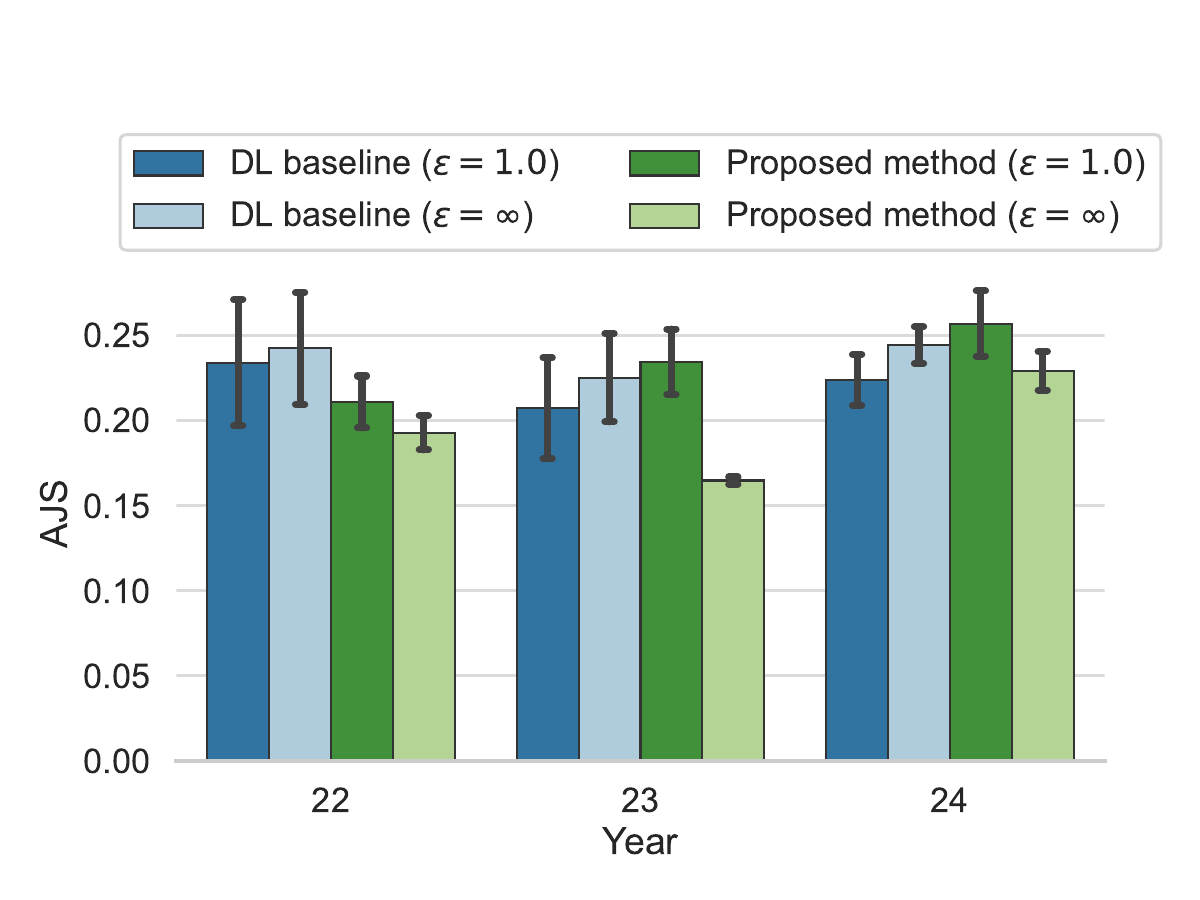}
    \caption{The average Jensen-Shannon divergence (AJS) between real and synthetic data for the one-shot synthesis by the baseline DL-based method and the proposed LLM-based method. Lower AJS indicates better fidelity.}
    \label{fig:one-shot}
\end{figure}

We adapt the two-stage method to the multi-shot setting. \cref{fig:utility-caps} shows the AJS gains (i.e. difference from the one-shot version) over cycles within the multi-shot setting.
We observe similar utility gains between the DL- and LLM-based methods, with both being not significant. This indicates that there is no notable difference in how effectively these methods can be improved by adapting to multi-shot synthesis, implying the need for further investigation of adaption techniques.

\begin{figure}
    \includegraphics[width=0.48\textwidth]{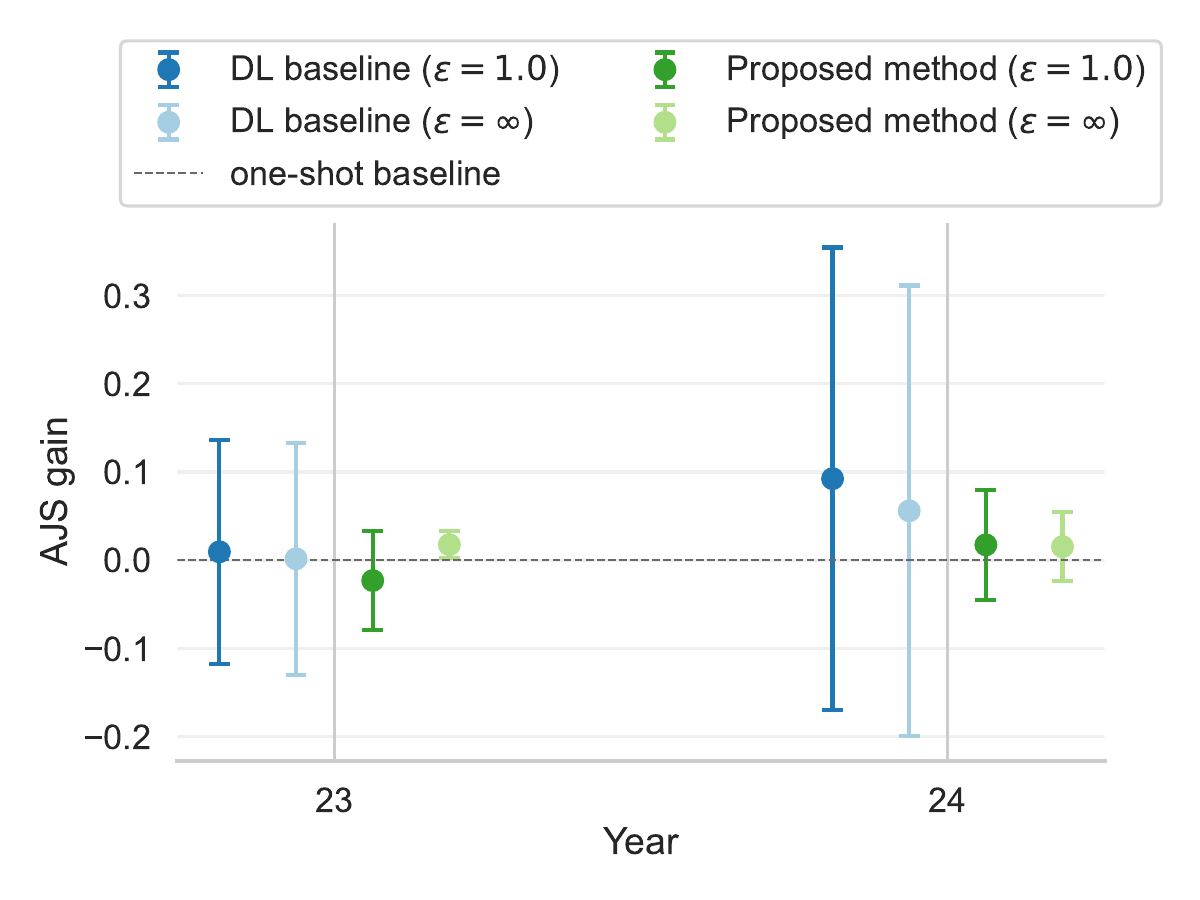}
    \caption{Gains in the average Jensen-Shannon divergence (AJS) by adapting the DL-based baseline method and the proposed two-stage method to the multi-shot synthesis setting. Note that the DP guarantee of generated data for the two-stage method is only with respect to the real data of that year because the generation process uses non-DP validation results of the previous year.}
    \label{fig:utility-caps}
\end{figure}

Finally, EPrec of each request reported by researchers in the case study is summarised in \cref{tab:precision}. The average EPrec of the three datasets is .36, meaning that on average 36\% of findings derived from synthetic data were validated on real data. It is observed that the datasets exhibit different levels of EPrec, despite similar utility in terms of AJS. This implies that statistical utility metrics such as AJS may not capture real-world utility in downstream analyses.

\begin{table}
    \centering
    \caption{Reported epistemic precisions (EPrec).}
    \label{tab:precision}
    \footnotesize
    \begin{tabular}{lrrrrr}
        \toprule
        & \multicolumn{4}{c}{\textbf{Request counts}} \\
        \cmidrule(lr){2-5}
        \textbf{Dataset}& \textbf{EPrec: 0.0} & \textbf{0.5} & \textbf{1.0} & \textbf{All} & \textbf{EPrec} \\
        \midrule
        2022 & 3 & 5 & 1 & 9 & .39 \\
        2023 & 5 & 6 & 0 & 11 & .27 \\
        2024 & 0 & 5 & 0 & 5 & .50 \\
        \midrule
        \textbf{Total} & \textbf{8} & \textbf{16} & \textbf{1} & \textbf{25} & \textbf{avg. .36} \\
        \bottomrule
    \end{tabular}
\end{table}

\subsection{RQ2: Privacy risk of two-stage sharing}

For the privacy risk of the two-stage method, we first analyse empirical $\mu_{n,\sD}$-GWMIP. \cref{fig:mu-fit} shows the estimated $\mu$ at each request index in the chronological order for each dataset. These indicate empirical $\mu$-GWMIP with respect to $n=120$ students and the underlying distribution for the set of requests $1,\dots,R$. The regrets $\delta$ of the fit are moderate yet not sufficiently small ($M=0.0399$, $SD=0.0086$ for 2022, $M=0.0435$, $SD=0.0074$ for 2023 and $M=0.0485$, $SD=0.0105$ for dataset 2024), as Gomez et al.\ \cite{Gomez2025GDP} argue that $\delta < 0.01$ indicates a good fit. Thus our estimation might not fully capture the trade-off curves, which we aim to complement by reporting MIA advantages below. The red dashed lines indicate the $\mu$ values for $\mu$-GWMIP that is guaranteed by the DP-SDG process. Note that the DL-based method also guarantees the same level of formal privacy.

We observe that the privacy risk does not monotonically increase as the output size increases, perhaps because adding more output introduces noise as well as signal of membership. Overall, non-DP real-data validation introduces a moderate privacy risk to the DP-SDG baseline (i.e. the case of Stage 1 only). However, noting that non-DP SDG would offer no provable privacy guarantee and limited utility (\cref{fig:one-shot,fig:utility-caps}), the two-stage method significantly increases the practical utility by real-data validation as a trade-off with tolerable privacy leakage. Nonetheless, it should be noted that our attack is not optimal and only provide empirical risks, unlike the provable guarantee of DP.

\begin{figure*}[t]
    \centering
    \includegraphics[width=0.3\textwidth]{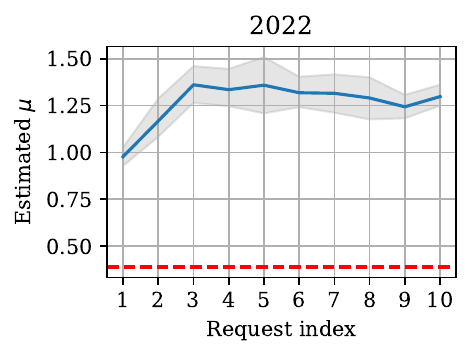}
    \includegraphics[width=0.3\textwidth]{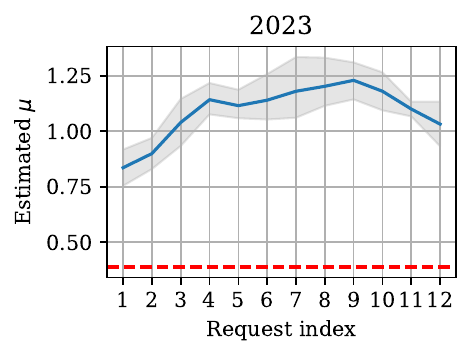} 
    \includegraphics[width=0.3\textwidth]{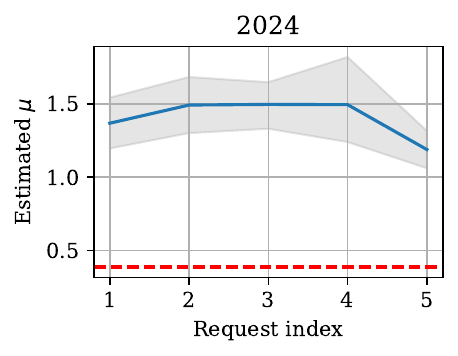}
    \caption{The empirical $\mu$-GWMIP with respect to $n=120$ and the underlying distribution for the two-stage sharing at each request index in the chronological order. $\mu$ at request $R$ accounts for DP-SDG and releasing the set of requests $1,\dots,R$. The red dashed lines indicate the privacy guarantee for which the DP-SDG alone provides without real-data validation (i.e. the case of Stage 1 only).}
    \label{fig:mu-fit}
\end{figure*}

To complement the empirical privacy parameter results, \cref{fig:advantage} shows the MIA advantage of the worst-case example with the set of requests $1,\dots,R$. While the dataset 2022 shows an explicitly increasing trend of MIA advantage with request counts, the other datasets exhibit less clear patterns. This indicates that privacy risk of real-data validation highly depends on the type of analysis, rather than the volume of output.

\begin{figure*}[t]
    \centering
    \includegraphics[width=0.3\textwidth]{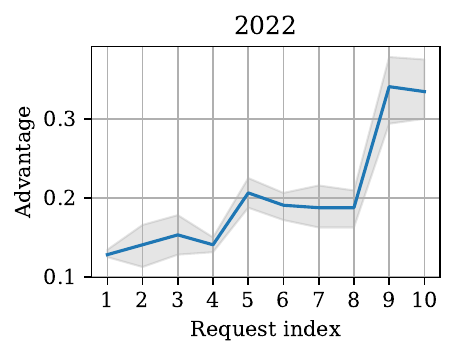}
    \includegraphics[width=0.3\textwidth]{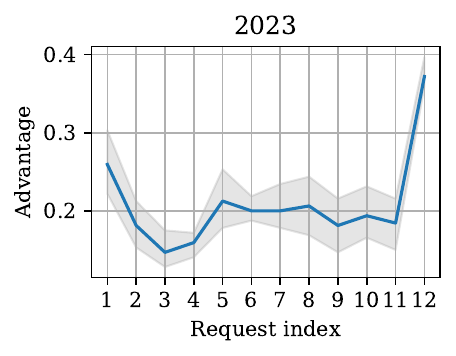} 
    \includegraphics[width=0.3\textwidth]{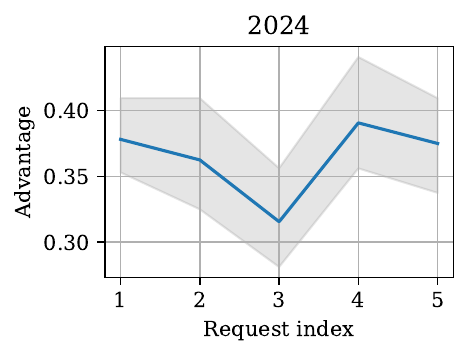}
    \caption{The MIA advantage of the worst-case example at each request index in the chronological order. The advantage at request $R$ is the MIA advantage when an attacker has access to the outputs of all requests $1,\dots,R$.}
    \label{fig:advantage}
\end{figure*}

To investigate how different types of secondary-use analysis pose different levels of privacy leakage, we plot the worst-case MIA advantage per request for each participant, as each participant consistently conducted similar analyses. \cref{fig:adv-requestor} shows that requests by researcher P4 exhibit remarkably high MIA advantage than those by the others. Qualitatively analysing submitted code, we find that P4 extensively focused on causal analysis while the others explored rather correlational aspects of the relationship between learning habits and learning outcomes. Specifically, P4 studied research questions such as ``what is the causal relationship between learning habits and learning outcome?'' using DirectLiNGAM \cite{Shimizu2011DirectLiNGAM}, a causal discovery technique. On the other hand, representative research questions of the other participants include ``How is the time-of-day patterns of learning habits related to learning outcomes?'' The discrepancy in privacy risk between these analysis types may be because DirectLiNGAM is sensitive to outliers \cite{Leyder2024TSLiNGAM} and thus extracts more signals of specific targets, increasing the MIA vulnerability.

\begin{figure}[tb]
    \centering
    \includegraphics[width=0.45\textwidth]{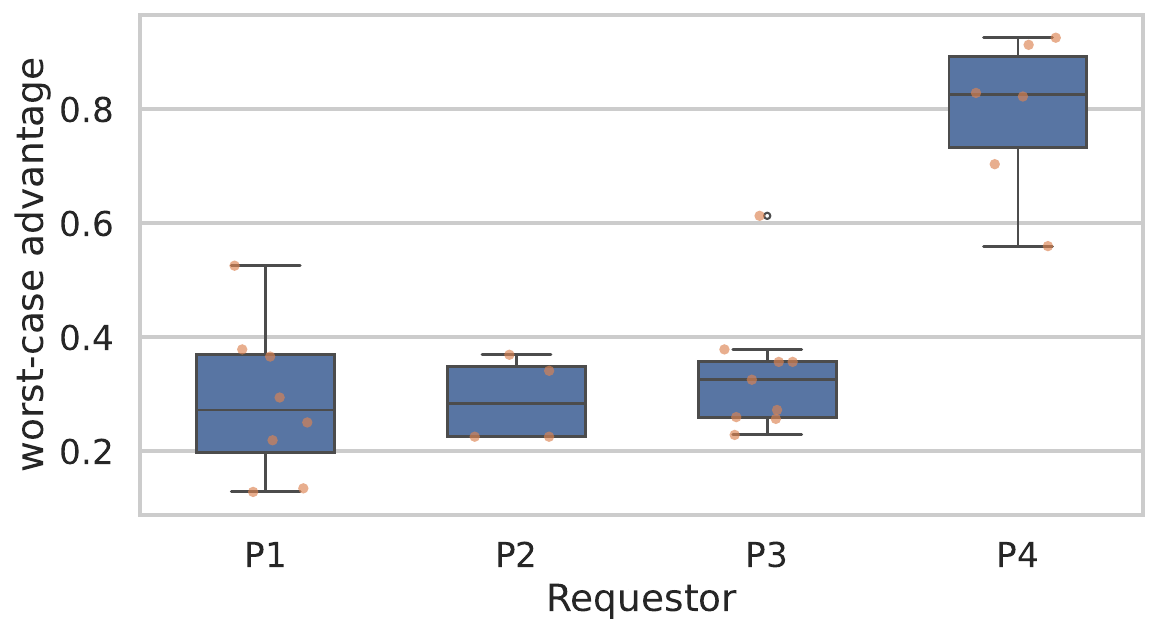}
    \caption{The worst-case MIA advantage per request for each participant.}
    \label{fig:adv-requestor}
\end{figure}

To further understand the relationship between the privacy risk and request characteristics, we show the pair-plot of EPrec, output dimension, the worst-case MIA advantage and empirical $\mu$-GWMIP, with respect to $n=120$ and the underlying distribution, of each request alone and highlight points that belong to P4. The request outputs of P4 exhibit relatively small dimensions and high privacy leakage, suggesting that the information content, rather than the output volume, primarily determines privacy risk. The negative correlation between output size and risk metrics also reflect this observation in that more output does not necessarily imply more privacy leakage. Additionally, we find that EPrec is negatively correlated with output size and positively correlated with risk metrics, though not significant, showing two important trade-offs. First, as a researcher requests for more output, the output more likely lead to false discoveries. Second, synthetic data with higher real-world utility (i.e. higher EPrec) tends to pose higher privacy leakage in the end within the two-stage sharing.

\begin{figure}[tb]
    \centering
    \includegraphics[width=0.48\textwidth]{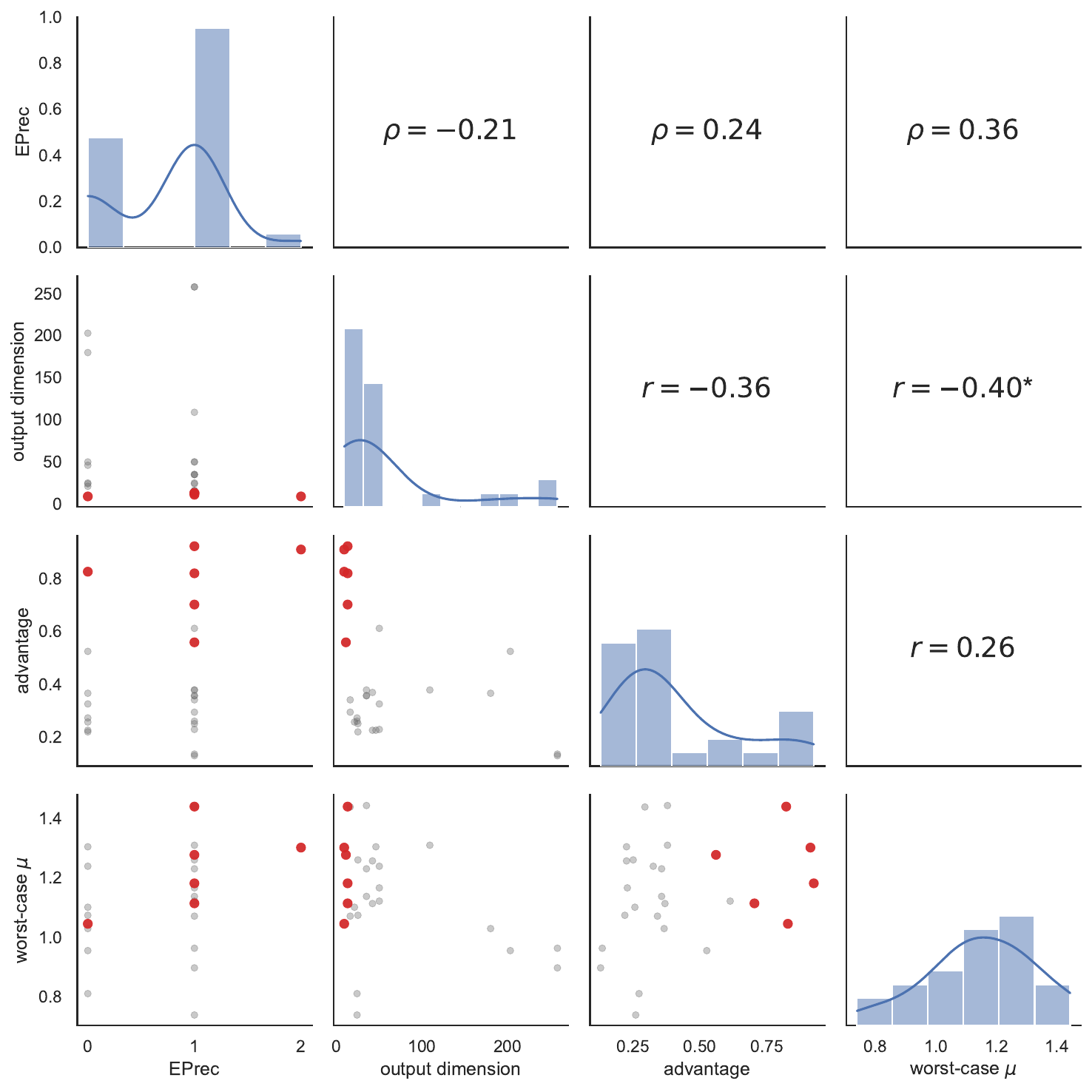}
    \caption{Pair-plot of epistemic precision (EPrec), output dimension, the worst-case MIA advantage and the empirical $\mu$-GWMIP with respect to $n=120$ and the underlying distribution. Points that belong to P4 are highlighted by red. For the pair of EPrec and another feature we calculate the Spearman correlation coefficient and otherwise the Pearson correlation. Statistical significance ($p<.05$) is indicated by an asterisk.}
    \label{fig:pairplot}
\end{figure}

\section{Discussion}

\subsection{Key findings}

While DP-SDG achieves robust, provable privacy guarantee, it remains largely infeasible for practical implementation in educational RWD sharing. To address this issue, we proposed a practical two-stage method with LLM-based DP-SDG and real-data validation. Though non-DP validation breaks the DP guarantee of Stage 1, we showed that privacy leakage of the entire data sharing process can still be analysed by focusing on WMIP. Nonetheless, it should again be noted that, while DP mechanisms provide formal guarantees, WMIP is necessarily empirical.

Regarding RQ1, the utility analysis demonstrated that our simple LLM-based DP-SDG shows comparable performance to the DL-based method while significantly reducing engineering costs. In addition, both methods were not significantly improved by adapting to the multi-shot setting, suggesting the need of further investigation of adaption techniques. The case study emulating a three-year secondary use of RWD demonstrated that the real-world utility, evaluated by EPrec, remains a challenge with only 36\% of synthetic findings validated on real data on average. Moreover, the inconsistency between the statistical metric of utility and EPrec indicates that common utility metrics do not necessarily capture the real-world utility of SDG methods.

For RQ2, the risk analysis revealed that real-data validation introduces measurable yet moderate privacy leakage. That is, the worst-case risk of membership inference and, consequently, attribute inference remains tolerable. Analysing accumulated risks over requests showed that allowing for more validation does not necessarily increase risks, potentially implying the need to carefully check the output of each request rather than minimising the number of requests. Further analysis demonstrated that analytical techniques with less statistical robustness, such as DirectLiNGAM, can pose higher privacy risk with a relatively small output size. We also observed the trade-off between EPrec and output size as well as EPrec and privacy. 

\subsection{Implications}

We consider implications of the above findings from the perspectives of data sharing research in the education domain, data custodians who employ the two-stage sharing and third-party researchers who analyse data shared by the two-stage method. 

First, for future research, our evaluation identified challenges in practical utility (i.e. EPrec) and risk mitigation, highlighting the need to further refine and enhance the proposed two-stage method. For example, the choice of summary statistics given to LLMs currently relies on heuristics, and the prompt for generating synthetic data remains simplistic and may not be optimal. As this work presented the potential of the simple, practical method with comparable performance to a DL-based method, more attention should be given to this line of work, rather than just applying state-of-the-art DP-SDG methods that are not practically usable most of the time in the field of education. In addition, though cyclically adapting private synthesis to multi-shot settings has the potential of improving generic, one-shot methods by taking the advantage of the iterative nature of education \cite{Ito2026CAPS}, our experiment did not demonstrate significant effectiveness. Thus, further study of multi-shot adaption would be needed. Importantly, as the case study disclosed, statistical metrics may not fully capture the real-world utility, thereby evaluation in more authentic settings being crucial. This aligns with the emerging recognition of real-world evaluation in the DP community \cite{Rosenblatt2023parity}. Furthermore, developing a methodology and clear criteria for output checking of real-data validation is needed based on risk analysis.

Second, the evaluation results help data custodians with communicating with stakeholders, including data subjects, institutions and third-party researchers. Since we focus on the worst-case privacy risk, the same privacy protection is undertaken for all individuals in the dataset. While it is hard to interpret and communicate privacy parameters like $\mu_{n,\sD}$ \cite{Dibia2025DPusability}, data custodians should focus on clarifying that the two-stage data sharing aims to minimise the risk that each student's participation in data sharing and particular learning behaviours are inferred by third parties \cite{Cyffers2025DPsetting}. For reviewing validation requests, our experimental results suggest that each analysis should be carefully checked, especially for its statistical robustness rather than simple heuristics like the output size or the number of requests, unlike traditional SDC techniques \cite{Griffiths2024SDC}. It should also be noted that, as we discussed in \cref{sec:two-stage}, WMIP does not imply protection against reconstruction attacks \cite{Salem2023privacygames}. Thus, the risk of dataset-level leakage should be taken into account depending on policies or regulations. 

Third, for researchers who perform secondary use of the shared data, our results help developing analysis by acknowledging the limitation of the synthetic data and privacy risk in real-data validation. As it is crucial for researchers to appreciate privacy protection for trustworthy secondary use \cite{Hubbard2020TRE}, a researcher should aim for non-disclosive output, avoiding analyses sensitive to outliers. The trade-off between EPrec and output-dimension should also be accounted in a validation request.

\subsection{Limitations}

The evaluation experiments relied on limited RWD focusing on learning habits of secondary-school students. The method should be tested for more various data for its utility and risks. We also relied on limited risk metrics where our attack may not be optimal, potentially underestimating the WMIP. Additionally, the case study is conducted with only four researchers with twenty-five requests in total, emulating three-year secondary use in a small scale. While this real-world evaluation is a significant first step beyond statistical metrics, the generalisability of the case study results may remain limited. Since our proposed method focuses on practicality, future work should test it more comprehensively in real-world settings.

\section{Conclusion}

Sharing educational RWD while protecting individual privacy is a substantial practical challenge. Though DP-SDG has gained growing attention for its theoretical promise, it remains far from being a viable solution in the education domain due to the locality of educational RWD. In this paper, we addressed this issue by presenting a novel two-stage method combining LLM-based DP-SDG and real-data validation. By focusing on the privacy notion of WMIP, we account for the risk of membership inference and attribute inference while allowing for real-data validation to support trustworthy secondary use and avoid false discoveries. Our empirical evaluation demonstrated that the simple LLM-based DP-SDG achieves comparable utility to the DL baseline with significantly reduced engineering costs, and that the entire two-stage sharing poses measurable but tolerable privacy risks. Nonetheless, the case study identified limited epistemic precision of synthetic findings. Overall, while the proposed method offers a highly practical method for sharing RWD in education, the evaluation suggests the need to further investigate risk mitigation and real-world utility enhancement.

\section*{Acknowledgments}
Anonymised.

{\appendices
\section{Details of the DL baseline}\label{sec:model-architecture}

For M1, both encoder and decoder have hidden layers of sizes 32 and 64. M2 has no hidden layer. To mitigate posterior collapse \cite{vanDenOord2017VQ}, we used dropout \cite{Srivastava2014Dropout} with probabilities 0.2 and 0.5 for the encoder and decoder, respectively. Additionally, cyclical $\beta$-annealing \cite{Fu2019annealing} was used for training M1 and M2 with maximum $\beta=0.001$. The ranges for HPO are shown in \cref{tab:hypers}.

\begin{table}[h!]
    \centering
    \caption{Ranges for hyperparameter optimisation}
    \label{tab:hypers}
    \begin{tabular}{@{}cll@{}}
        \toprule
        \multirow{2}{1.5em}{M1} & learning rate $\gamma$ & $[10^{-5}, 10^{-2}]$ (log-scale) \\
        & batch size & $[16, 512]$ ($\text{step}=8$) \\
        \midrule
        \multirow{5}{1.5em}{M2} & learning rate $\gamma$ & $[10^{-5}, 10^{-2}]$ (log-scale) \\
        & public batch size $B_\pub$ & $[16, 512]$ ($\text{step}=8$) \\
        & private batch size $B_\priv$ & $[1, |D_t|-1]$ ($\text{step}=1$) \\
        & epochs & $[1, 100]$ ($\text{step}=1$) \\
        & clipping norm $C$ & $[0.1, 5.0]$ ($\text{step}=0.1$) \\
        \bottomrule
    \end{tabular}
\end{table}

}

\bibliographystyle{IEEEtran}
\bibliography{references}

@ARTICLE{Mahajan2015RWD,
  title     = "{Real world data: Additional source for making clinical
               decisions}",
  author    = "Mahajan, Rajiv",
  journal   = "International journal of applied \& basic medical research",
  publisher = "Medknow",
  volume    =  5,
  number    =  2,
  pages     =  82,
  year      =  2015,
  optionalurl       = "http://dx.doi.org/10.4103/2229-516X.157148",
  doi       = "10.4103/2229-516X.157148"
}

@INPROCEEDINGS{Ito2026CAPS,
  title     = "{Cyclic Adaptive Private Synthesis for Sharing Real-World Data in
               Education}",
  author    = "Ito, Hibiki and Hsu, Chia-Yu and Ogata, Hirogaki",
  booktitle = "{Proceedings of the 16th International Learning Analytics and
               Knowledge Conference}",
  publisher = "Association for Computing Machinery",
  year      =  2026,
  optionalurl       = "http://dx.doi.org/10.1145/3785022.3785026",
  doi       = "10.1145/3785022.3785026"
}

@ARTICLE{Pan2024SRLA,
  title     = "{A systematic review of learning analytics: Incorporated
               instructional interventions on learning management systems}",
  author    = "Pan, Zilong and Biegley, Lauren and Taylor, Allen and Zheng, Hua",
  journal   = "Journal of Learning Analytics",
  publisher = "Society for Learning Analytics Research",
  volume    =  11,
  number    =  2,
  pages     = "52--72",
  year      =  2024,
  optionalurl       = "http://dx.doi.org/10.18608/jla.2023.8093",
  doi       = "10.18608/jla.2023.8093"
}

@ARTICLE{Okumura2026RWE,
  title     = "{Causal discovery for automated real-world educational evidence
               extraction}",
  author    = "Okumura, Koki and Nishioka, Kento and Koike, Kento and Horikoshi,
               Izumi and Ogata, Hiroaki",
  journal   = "Research and practice in technology enhanced learning",
  publisher = "Asia-Pacific Society for Computers in Education",
  volume    =  21,
  pages     =  020,
  year      =  2026,
  optionalurl       = "http://dx.doi.org/10.58459/rptel.2026.21020",
  doi       = "10.58459/rptel.2026.21020"
}

@ARTICLE{Kuromiya2023RWD,
  title     = "{Supporting reflective teaching workflow with real-world data and
               learning analytics}",
  author    = "Kuromiya, Hiroyuki and Nakanishi, Taro and Horikoshi, Izumi and
               Majumdar, Rwitajit and Ogata, Hiroaki",
  journal   = "Information and Technology in Education and Learning",
  publisher = "Japanese Society for Information and Systems in Education",
  volume    =  3,
  number    =  1,
  pages     = "Reg--p003",
  year      =  2023,
  optionalurl       = "https://www.jstage.jst.go.jp/article/itel/3/1/3_3.1.Reg.p003/_article",
  doi       = "10.12937/itel.3.1.reg.p003"
}

@INPROCEEDINGS{Hutt2023forgotten,
  title     = "{The right to be forgotten and educational data mining:
               Challenges and paths forward}",
  author    = "Hutt, Stephen and Das, Sanchari and Baker, Ryan",
  booktitle = "{Proceedings of the 16th International Conference on Educational
               Data Mining}",
  publisher = "Zenodo",
  pages     = "251--259",
  year      =  2023,
  optionalurl       = "https://educationaldatamining.org/edm2023/proceedings/2023.EDM-short-papers.23/2023.EDM-short-papers.23.pdf",
  doi       = "10.5281/ZENODO.8115655"
}

@INPROCEEDINGS{Baker2024open,
  title     = "{Open Science and Educational Data Mining: Which Practices Matter
               Most?}",
  author    = "Baker, Ryan S and Hutt, Stephen and Brooks, Christopher A and
               Srivastava, Namrata and Mills, Caitlin",
  booktitle = "{Proceedings of the 17th International Conference on Educational
               Data Mining}",
  publisher = "International Educational Data Mining Society",
  pages     = "279--287",
  year      =  2024,
  optionalurl       = "https://educationaldatamining.org/edm2024/proceedings/2024.EDM-short-papers.24/2024.EDM-short-papers.24.pdf",
  doi       = "10.5281/ZENODO.12729816"
}

@ARTICLE{Fischer2020bigdata,
  title     = "{Mining big data in education: Affordances and challenges}",
  author    = "Fischer, Christian and Pardos, Zachary A and Baker, Ryan Shaun
               and Williams, Joseph Jay and Smyth, Padhraic and Yu, Renzhe and
               Slater, Stefan and Baker, Rachel and Warschauer, Mark",
  journal   = "Review of research in education",
  publisher = "American Educational Research Association (AERA)",
  volume    =  44,
  number    =  1,
  pages     = "130--160",
  year      =  2020,
  optionalurl       = "https://journals.sagepub.com/doi/full/10.3102/0091732X20903304",
  doi       = "10.3102/0091732x20903304"
}

@INPROCEEDINGS{Dawson2019impact,
  title     = "{Increasing the impact of learning analytics}",
  author    = "Dawson, Shane and Joksimovic, Srecko and Poquet, Oleksandra and
               Siemens, George",
  booktitle = "{Proceedings of the 9th International Conference on Learning
               Analytics \& Knowledge}",
  publisher = "ACM",
  pages     = "446--455",
  series    = "LAK19",
  year      =  2019,
  optionalurl       = "https://doi.org/10.1145/3303772.3303784",
  doi       = "10.1145/3303772.3303784"
}

@INPROCEEDINGS{Samarati1998kanonymity,
  title     = "{Generalizing data to provide anonymity when disclosing
               information}",
  author    = "Samarati, Pierangela and Sweeney, Latanya",
  booktitle = "{Proceedings of the seventeenth ACM SIGACT-SIGMOD-SIGART
               symposium on Principles of database systems}",
  publisher = "ACM",
  pages     =  188,
  year      =  1998,
  optionalurl       = "https://www.eng.auburn.edu/~xqin/courses/comp7370/k-anonymity-pods98.pdf",
  doi       = "10.1145/275487.275508"
}

@INPROCEEDINGS{Machanavajjhala2006ldiversity,
  title     = "{L-diversity: privacy beyond k-anonymity}",
  author    = "Machanavajjhala, A and Gehrke, J and Kifer, D and
               Venkitasubramaniam, M",
  booktitle = "{22nd International Conference on Data Engineering (ICDE'06)}",
  publisher = "IEEE",
  pages     = "24--24",
  year      =  2006,
  optionalurl       = "https://ieeexplore.ieee.org/document/1617392",
  doi       = "10.1109/icde.2006.1"
}

@ARTICLE{Gadotti2024review,
  title     = "{Anonymization: The imperfect science of using data while
               preserving privacy}",
  author    = "Gadotti, Andrea and Rocher, Luc and Houssiau, Florimond and
               Cre\c{t}u, Ana-Maria and de Montjoye, Yves-Alexandre",
  journal   = "Science advances",
  publisher = "American Association for the Advancement of Science (AAAS)",
  volume    =  10,
  number    =  29,
  pages     = "eadn7053",
  year      =  2024,
  optionalurl       = "http://dx.doi.org/10.1126/sciadv.adn7053",
  doi       = "10.1126/sciadv.adn7053"
}

@INPROCEEDINGS{Cohen2022attacks,
  title     = "{Attacks on Deidentification's Defenses}",
  author    = "Cohen, A",
  booktitle = "{USENIX Secur Symp}",
  pages     = "1469--1486",
  year      =  2022,
  optionalurl       = "https://www.usenix.org/system/files/sec22-cohen.pdf"
}

@MISC{Ito20263PAE,
  title         = "{The third-party access effect: An overlooked challenge in
                   secondary use of educational real-world data}",
  author        = "Ito, Hibiki and Hsu, Chia-Yu and Ogata, Hiroaki",
  year          =  2026,
  optionalurl           = "http://arxiv.org/abs/2601.22472",
  archivePrefix = "arXiv",
  primaryClass  = "cs.CY",
  eprint        = "2601.22472",
  doi           = "10.48550/arXiv.2601.22472"
}

@INPROCEEDINGS{Ito2025detailed,
  title     = "{Too Detailed to Share? Towards Risk-Based Privacy Protection of
               Fine-Grained Educational Data}",
  author    = "Ito, Hibiki and Hsu, Chia-Yu and Ogata, Hiroaki",
  booktitle = "{The Proceedings of the 33rd International Conference on
               Computers in Education (ICCE 2025)}",
  pages     = "312--321",
  series    =  1,
  year      =  2025,
  optionalurl       = "https://library.apsce.net/index.php/ICCE/article/view/5958"
}

@INPROCEEDINGS{Dwork2006DP,
  title     = "{Calibrating Noise to Sensitivity in Private Data Analysis}",
  author    = "Dwork, Cynthia and McSherry, Frank and Nissim, Kobbi and Smith,
               Adam",
  editor    = "Halevi, Shai and Rabin, Tal",
  booktitle = "{Theory of Cryptography}",
  publisher = "Springer",
  pages     = "265--284",
  series    = "Lecture Notes in Computer Science",
  year      =  2006,
  optionalurl       = "http://dx.doi.org/10.1007/11681878_14",
  doi       = "10.1007/11681878\_14"
}

@INPROCEEDINGS{Annamalai2024attribute,
  title     = "{A Linear Reconstruction Approach for Attribute Inference Attacks
               against Synthetic Data}",
  author    = "Annamalai, Meenatchi Sundaram Muthu and Gadotti, Andrea and
               Rocher, Luc",
  booktitle = "{33rd USENIX Security Symposium (USENIX Security 24)}",
  pages     = "2351--2368",
  year      =  2024,
  optionalurl       = "https://www.usenix.org/system/files/usenixsecurity24-annamalai-linear.pdf"
}

@INPROCEEDINGS{Lu2019SDG,
  title     = "{Empirical evaluation on synthetic data generation with
               generative adversarial network}",
  author    = "Lu, Pei-Hsuan and Wang, Pang-Chieh and Yu, Chia-Mu",
  booktitle = "{Proceedings of the 9th International Conference on Web
               Intelligence, Mining and Semantics}",
  publisher = "ACM",
  year      =  2019,
  optionalurl       = "http://dx.doi.org/10.1145/3326467.3326474",
  doi       = "10.1145/3326467.3326474"
}

@INPROCEEDINGS{Stadler2022groundhog,
  title     = "{Synthetic Data -- Anonymisation Groundhog Day}",
  author    = "Stadler, Theresa and Oprisanu, Bristena and Troncoso, Carmela",
  booktitle = "{31st USENIX Security Symposium (USENIX Security 22)}",
  pages     = "1451--1468",
  year      =  2022,
  optionalurl       = "https://www.usenix.org/conference/usenixsecurity22/presentation/stadler"
}

@ARTICLE{vanDerLinden2025synthhealth,
  title     = "{Who needs real data anyway? Exploring the use of synthetic data
               in economic evaluations of health interventions}",
  author    = "van der Linden, N and Pouwels, X G L V and Jahn, B and Siebert, U
               and Koffijberg, H",
  journal   = "Value in health: the journal of the International Society for
               Pharmacoeconomics and Outcomes Research",
  publisher = "Elsevier BV",
  year      =  2025,
  optionalurl       = "http://dx.doi.org/10.1016/j.jval.2025.06.007",
  doi       = "10.1016/j.jval.2025.06.007"
}

@ARTICLE{MontoyaPerez2024FP,
  title     = "{Does differentially private synthetic data lead to synthetic
               discoveries?}",
  author    = "Montoya Perez, Ileana and Movahedi, Parisa and Nieminen, Valtteri
               and Airola, Antti and Pahikkala, Tapio",
  journal   = "Methods of Information in Medicine",
  publisher = "Georg Thieme Verlag KG",
  volume    =  63,
  number    = "1-02",
  pages     = "35--51",
  year      =  2024,
  optionalurl       = "http://www.thieme-connect.de/DOI/DOI?10.1055/a-2385-1355",
  doi       = "10.1055/a-2385-1355"
}

@ARTICLE{Nguyen2024small,
  title     = "{Learning analytics with small datasets--state of the art and
               beyond}",
  author    = "Nguyen, Ngoc Buu Cat and Karunaratne, Thashmee",
  journal   = "Education Sciences",
  publisher = "MDPI AG",
  volume    =  14,
  number    =  6,
  pages     =  608,
  year      =  2024,
  optionalurl       = "http://dx.doi.org/10.3390/educsci14060608",
  doi       = "10.3390/educsci14060608"
}

@INPROCEEDINGS{Medeiros2025datagov,
  title     = "{Data governance in education: Addressing challenges and
               unlocking opportunities for effective data management}",
  author    = "Medeiros, Thiago and Ara\'{u}jo, Andr\'{e} and Silva, Jos\'{e}
               and Silva, Alenilton",
  booktitle = "{Proceedings of the 27th International Conference on Enterprise
               Information Systems}",
  publisher = "SCITEPRESS - Science and Technology Publications",
  pages     = "367--374",
  year      =  2025,
  optionalurl       = "http://dx.doi.org/10.5220/0013468300003929",
  doi       = "10.5220/0013468300003929"
}

@INCOLLECTION{Ullman2011hardness,
  title     = "{PCPs and the hardness of generating private synthetic data}",
  author    = "Ullman, Jonathan and Vadhan, Salil",
  booktitle = "{Theory of Cryptography}",
  publisher = "Springer Berlin Heidelberg",
  pages     = "400--416",
  series    = "Lecture notes in computer science",
  year      =  2011,
  optionalurl       = "http://dx.doi.org/10.1007/978-3-642-19571-6_24",
  doi       = "10.1007/978-3-642-19571-6\_24"
}

@MASTERSTHESIS{Ito2024thesis,
  title     = "{A Data-Centric Analysis of Membership Inference Attacks}",
  author    = "Ito, Hibiki",
  school    = "University of Helsinki",
  publisher = "University of Helsinki",
  year      =  2024,
  optionalurl       = "http://hdl.handle.net/10138/589286"
}

@ARTICLE{Pardo2014ethical,
  title     = "{Ethical and privacy principles for learning analytics: Ethical
               and privacy principles}",
  author    = "Pardo, Abelardo and Siemens, George",
  journal   = "British journal of educational technology: journal of the Council
               for Educational Technology",
  publisher = "Wiley",
  volume    =  45,
  number    =  3,
  pages     = "438--450",
  year      =  2014,
  optionalurl       = "https://onlinelibrary.wiley.com/doi/abs/10.1111/bjet.12152",
  doi       = "10.1111/bjet.12152"
}

@ARTICLE{Liu2023review,
  title     = "{Understanding privacy and data protection issues in learning
               analytics using a systematic review}",
  author    = "Liu, Qinyi and Khalil, Mohammad",
  journal   = "British Journal of Educational Technology: Journal of the Council
               for Educational Technology",
  publisher = "Wiley",
  volume    =  54,
  number    =  6,
  pages     = "1715--1747",
  year      =  2023,
  optionalurl       = "https://onlinelibrary.wiley.com/doi/abs/10.1111/bjet.13388",
  doi       = "10.1111/bjet.13388"
}

@INPROCEEDINGS{Wijerathne2024EREDA,
  title     = "{Empowering educational researchers with a privacy-centric data
               platform: Design, implementation, and implications}",
  author    = "Wijerathne, Isanka and Flanagan, Brendan and Ogata, Hiroaki",
  booktitle = "{Proceedings of the 32nd International Conference on Computers in
               Education}",
  publisher = "Asia-Pacific Society for Computers in Education",
  year      =  2024,
  optionalurl       = "https://library.apsce.net/index.php/ICCE/article/view/4878",
  doi       = "10.58459/icce.2024.4878"
}

@INPROCEEDINGS{Flanagana2018infra,
  title     = "{Integration of learning analytics research and production
               systems while protecting privacy}",
  author    = "Flanagana, Brendan and Ogata, Hiroaki",
  booktitle = "{Proceedings of the 25th International Conference on Computers in
               Education}",
  year      =  2018,
  optionalurl       = "https://library.apsce.net/index.php/ICCE/article/view/2103"
}

@INCOLLECTION{Joksimovic2022privacy,
  title     = "{Privacy-driven learning analytics}",
  author    = "Joksimovi\'{c}, Sre\'{c}ko and Marshall, Ruth and Rakotoarivelo,
               Thierry and Ladjal, Djazia and Zhan, Chen and Pardo, Abelardo",
  booktitle = "{Manage Your Own Learning Analytics}",
  publisher = "Springer International Publishing",
  pages     = "1--22",
  series    = "Smart Innovation, Systems and Technologies",
  year      =  2022,
  optionalurl       = "https://link.springer.com/chapter/10.1007/978-3-030-86316-6_1",
  doi       = "10.1007/978-3-030-86316-6\_1"
}

@INPROCEEDINGS{Torre2020ELAT,
  title     = "{edX log data analysis made easy: introducing ELAT: An
               open-source, privacy-aware and browser-based edX log data
               analysis tool}",
  author    = "Torre, Manuel Valle and Tan, Esther and Hauff, Claudia",
  booktitle = "{Proceedings of the Tenth International Conference on Learning
               Analytics \& Knowledge}",
  publisher = "ACM",
  year      =  2020,
  optionalurl       = "http://dx.doi.org/10.1145/3375462.3375510",
  doi       = "10.1145/3375462.3375510"
}

@ARTICLE{Yacobson2021deident,
  title     = "{De-identification is insufficient to protect student privacy, or
               -- what can a field trip reveal?}",
  author    = "Yacobson, Elad and Fuhrman, Orly and Hershkovitz, Sara and
               Alexandron, Giora",
  journal   = "Journal of learning analytics",
  publisher = "Society for Learning Analytics Research",
  volume    =  8,
  number    =  2,
  pages     = "83--92",
  year      =  2021,
  optionalurl       = "https://www.learning-analytics.info/index.php/JLA/article/view/7353",
  doi       = "10.18608/jla.2021.7353"
}

@ARTICLE{Vatsalan2022risk,
  title     = "{Privacy risk quantification in education data using Markov
               model}",
  author    = "Vatsalan, Dinusha and Rakotoarivelo, Thierry and Bhaskar, Raghav
               and Tyler, Paul and Ladjal, Djazia",
  journal   = "British journal of educational technology: journal of the Council
               for Educational Technology",
  publisher = "Wiley",
  volume    =  53,
  number    =  4,
  pages     = "804--821",
  year      =  2022,
  optionalurl       = "https://onlinelibrary.wiley.com/doi/abs/10.1111/bjet.13223",
  doi       = "10.1111/bjet.13223"
}

@ARTICLE{Kyritsi2019privacy,
  title    = "{The Pursuit of Patterns in Educational Data Mining as a Threat to
              Student Privacy}",
  author   = "Kyritsi, Kyriaki H and Zorkadis, Vassilios and Stavropoulos, Elias
              C and Verykios, Vassilios S",
  journal  = "Journal of Interactive Media in Education",
  volume   =  2019,
  number   =  1,
  pages    =  2,
  year     =  2019,
  optionalurl      = "https://jime.open.ac.uk/articles/10.5334/jime.502",
  doi      = "10.5334/jime.502"
}

@INPROCEEDINGS{Angiuli2016kanon,
  title     = "{Statistical tradeoffs between generalization and suppression in
               the DE-identification of large-scale data sets}",
  author    = "Angiuli, Olivia and Waldo, Jim",
  booktitle = "{2016 IEEE 40th Annual Computer Software and Applications
               Conference (COMPSAC)}",
  publisher = "IEEE",
  volume    =  2,
  pages     = "589--593",
  year      =  2016,
  optionalurl       = "https://ieeexplore.ieee.org/abstract/document/7552278",
  doi       = "10.1109/compsac.2016.198"
}

@ARTICLE{Stinar2024kanon,
  title     = "{An approach to improve k-anonymization practices in educational
               data mining}",
  author    = "Stinar, Frank and Xiong, Zihan and Bosch, Nigel",
  journal   = "Journal of Educational Data Mining",
  publisher = "Zenodo",
  volume    =  16,
  number    =  1,
  pages     = "61--83",
  year      =  2024,
  optionalurl       = "https://jedm.educationaldatamining.org/index.php/JEDM/article/view/764",
  doi       = "10.5281/ZENODO.11056083"
}

@ARTICLE{Kuzilek2017OULAD,
  title    = "{Open University Learning Analytics dataset}",
  author   = "Kuzilek, Jakub and Hlosta, Martin and Zdrahal, Zdenek",
  journal  = "Scientific Data",
  volume   =  4,
  number   =  1,
  pages    =  170171,
  year     =  2017,
  optionalurl      = "https://www.nature.com/articles/sdata2017171",
  doi      = "10.1038/sdata.2017.171"
}

@MISC{HarvardX2014data,
  title     = "{HarvardX Person-Course Academic Year 2013 De-Identified dataset,
               version 3.0}",
  author    = "{HarvardX}",
  publisher = "Harvard Dataverse",
  year      =  2014,
  optionalurl       = "http://dx.doi.org/10.7910/DVN/26147",
  doi       = "10.7910/DVN/26147"
}

@ARTICLE{Prasser2020ARX,
  title     = "{Flexible data anonymization using ARX--Current status and
               challenges ahead: Flexible data anonymization using ARX-Current
               status and challenges ahead}",
  author    = "Prasser, Fabian and Eicher, Johanna and Spengler, Helmut and
               Bild, Raffael and Kuhn, Klaus A",
  journal   = "Software: practice \& experience",
  publisher = "Wiley",
  volume    =  50,
  number    =  7,
  pages     = "1277--1304",
  year      =  2020,
  optionalurl       = "http://dx.doi.org/10.1002/spe.2812",
  doi       = "10.1002/spe.2812"
}

@MISC{CSIRO2019R4,
  title        = "{Re-Identification Risk Quantification}",
  author       = "{CSIRO}",
  booktitle    = "{Privacy Technology Research Group}",
  year         =  2019,
  howpublished = "\url{https://research.csiro.au/isp/research/privacy/r4/}"
}

@ARTICLE{Fredrikson2014AI,
  title    = "{Privacy in pharmacogenetics: An end-to-end case study of
              personalized warfarin dosing}",
  author   = "Fredrikson, Matt and Lantz, Eric and Jha, S and Lin, Simon M and
              Page, David and Ristenpart, Thomas",
  journal  = "USENIX Security Symposium",
  volume   =  2014,
  pages    = "17--32",
  year     =  2014,
  optionalurl      = "https://www.usenix.org/system/files/conference/usenixsecurity14/sec14-paper-fredrikson-privacy.pdf"
}

@INPROCEEDINGS{Shokri2017MIA,
  title     = "{Membership inference attacks against machine learning models}",
  author    = "Shokri, Reza and Stronati, Marco and Song, Congzheng and
               Shmatikov, Vitaly",
  booktitle = "{2017 IEEE Symposium on Security and Privacy (SP)}",
  publisher = "IEEE",
  pages     = "3--18",
  year      =  2017,
  optionalurl       = "https://ieeexplore.ieee.org/abstract/document/7958568",
  doi       = "10.1109/sp.2017.41"
}

@ARTICLE{Flanagan2022synth,
  title     = "{Fine grain synthetic educational data: Challenges and
               limitations of collaborative learning analytics}",
  author    = "Flanagan, Brendan and Majumdar, Rwitajit and Ogata, Hiroaki",
  journal   = "IEEE access: practical innovations, open solutions",
  publisher = "Institute of Electrical and Electronics Engineers (IEEE)",
  volume    =  10,
  pages     = "26230--26241",
  year      =  2022,
  optionalurl       = "https://ieeexplore.ieee.org/document/9726239",
  doi       = "10.1109/access.2022.3156073"
}

@INPROCEEDINGS{Khalil2025CTGAN,
  title     = "{Creating artificial students that never existed: Leveraging
               large language models and CTGANs for synthetic data generation}",
  author    = "Khalil, Mohammad and Vadiee, Farhad and Shakya, Ronas and Liu,
               Qinyi",
  booktitle = "{Proceedings of the 15th International Learning Analytics and
               Knowledge Conference}",
  publisher = "ACM",
  pages     = "439--450",
  year      =  2025,
  optionalurl       = "http://dx.doi.org/10.1145/3706468.3706523",
  doi       = "10.1145/3706468.3706523"
}

@INCOLLECTION{Vie2022synth,
  title     = "{Privacy-preserving synthetic educational data generation}",
  author    = "Vie, Jill-J\^{e}nn and Rigaux, Tomas and Minn, Sein",
  booktitle = "{Lecture Notes in Computer Science}",
  publisher = "Springer International Publishing",
  pages     = "393--406",
  series    = "Lecture notes in computer science",
  year      =  2022,
  optionalurl       = "https://link.springer.com/chapter/10.1007/978-3-031-16290-9_29",
  doi       = "10.1007/978-3-031-16290-9\_29"
}

@ARTICLE{Iloh2025CTGAN,
  title     = "{Generative private synthetic student data for learning
               analytics: An empirical study}",
  author    = "Iloh, Divine and Olayinka, Oluwakemi Temitope and Iyere, Faith
               and Chilakala, Sanath",
  journal   = "IEEE access: practical innovations, open solutions",
  publisher = "Institute of Electrical and Electronics Engineers (IEEE)",
  volume    = "PP",
  number    =  99,
  pages     = "1--1",
  year      =  2025,
  optionalurl       = "http://dx.doi.org/10.1109/ACCESS.2025.3619091",
  doi       = "10.1109/access.2025.3619091"
}

@INCOLLECTION{Bautista2021GAN,
  title     = "{Protecting student privacy with synthetic data from generative
               adversarial networks}",
  author    = "Bautista, Peter and Inventado, Paul Salvador",
  booktitle = "{Lecture Notes in Computer Science}",
  publisher = "Springer International Publishing",
  pages     = "66--70",
  series    = "Lecture notes in computer science",
  year      =  2021,
  optionalurl       = "https://link.springer.com/chapter/10.1007/978-3-030-78270-2_11",
  doi       = "10.1007/978-3-030-78270-2\_11"
}

@INPROCEEDINGS{Liu2025fairprivate,
  title     = "{Can synthetic data be fair and private? A comparative study of
               synthetic data generation and fairness algorithms}",
  author    = "Liu, Qinyi and Deho, Oscar and Vadiee, Farhad and Khalil,
               Mohammad and Joksimovic, Srecko and Siemens, George",
  booktitle = "{Proceedings of the 15th International Learning Analytics and
               Knowledge Conference}",
  publisher = "ACM",
  pages     = "591--600",
  year      =  2025,
  optionalurl       = "https://doi.org/10.1145/3706468.3706546",
  doi       = "10.1145/3706468.3706546"
}

@ARTICLE{Liu2025DPGAN,
  title     = "{Ensuring privacy through synthetic data generation in education}",
  author    = "Liu, Qinyi and Shakya, Ronas and Jovanovic, Jelena and Khalil,
               Mohammad and de la Hoz-Ruiz, Javier",
  journal   = "British journal of educational technology: journal of the Council
               for Educational Technology",
  publisher = "Wiley",
  volume    =  56,
  number    =  3,
  pages     = "1053--1073",
  year      =  2025,
  optionalurl       = "https://onlinelibrary.wiley.com/doi/abs/10.1111/bjet.13576",
  doi       = "10.1111/bjet.13576"
}

@ARTICLE{Zhan2024DPGAN,
  title     = "{Preserving both privacy and utility in learning analytics}",
  author    = "Zhan, Chen and Joksimovi\'{c}, Sre\'{c}ko and Ladjal, Djazia and
               Rakotoarivelo, Thierry and Marshall, Ruth and Pardo, Abelardo",
  journal   = "IEEE transactions on learning technologies",
  publisher = "Institute of Electrical and Electronics Engineers (IEEE)",
  volume    =  17,
  pages     = "1615--1627",
  year      =  2024,
  optionalurl       = "https://ieeexplore.ieee.org/document/10508498",
  doi       = "10.1109/tlt.2024.3393766"
}

@ARTICLE{Kesgin2025FairSYN,
  title     = "{FairSYN-Edu a diffusion-based model for fair and private
               educational data synthesis}",
  author    = "Kesgin, Kadir",
  journal   = "Discover education",
  publisher = "Springer Science and Business Media LLC",
  volume    =  4,
  number    =  1,
  pages     = "1--18",
  year      =  2025,
  optionalurl       = "http://dx.doi.org/10.1007/s44217-025-00743-9",
  doi       = "10.1007/s44217-025-00743-9"
}

@INCOLLECTION{Saqr2026idio,
  title     = "{Individualized analytics: Within-person and idiographic
               analysis}",
  author    = "Saqr, Mohammed and Ito, Hibiki and L\'{o}pez-Pernas, Sonsoles",
  booktitle = "{Advanced Learning Analytics Methods}",
  publisher = "Springer Nature Switzerland",
  pages     = "471--491",
  year      =  2026,
  optionalurl       = "https://lamethods.org/book2/chapters/ch18-idio/ch18-idio.html",
  doi       = "10.1007/978-3-031-95365-1\_18"
}

@ARTICLE{Mao2024DPtime,
  title   = "{Differential Privacy for Time Series: A Survey}",
  author  = "Mao, Yulian and Ye, Qingqing and Wang, Qi and Hu, Haibo",
  journal = "IEEE Bulletin of the Technical Committee on Data Engineering",
  volume  =  48,
  number  =  1,
  pages   = "67--92",
  year    =  2024,
  optionalurl     = "https://openreview.net/forum?id=D09lkXhXoc"
}

@ARTICLE{He2024onlineSDG,
  title     = "{Online Differentially Private Synthetic Data Generation}",
  author    = "He, Yiyun and Vershynin, Roman and Zhu, Yizhe",
  journal   = "IEEE transactions on privacy",
  publisher = "Institute of Electrical and Electronics Engineers (IEEE)",
  volume    =  1,
  pages     = "19--30",
  year      =  2024,
  optionalurl       = "http://dx.doi.org/10.1109/tp.2024.3486687",
  doi       = "10.1109/tp.2024.3486687"
}

@ARTICLE{Bun2024continual,
  title     = "{Continual release of differentially private synthetic data from
               longitudinal data collections}",
  author    = "Bun, Mark and Gaboardi, Marco and Neunhoeffer, Marcel and Zhang,
               Wanrong",
  journal   = "Proceedings of the ACM on management of data",
  publisher = "Association for Computing Machinery (ACM)",
  volume    =  2,
  number    =  2,
  pages     = "1--26",
  year      =  2024,
  optionalurl       = "http://dx.doi.org/10.1145/3651595",
  doi       = "10.1145/3651595"
}

@MISC{Gu2025continuous,
  title         = "{Private continuous-time synthetic trajectory generation via
                   mean-field Langevin dynamics}",
  author        = "Gu, Anming and Chien, Edward and Greenewald, Kristjan",
  year          =  2025,
  optionalurl           = "http://arxiv.org/abs/2506.12203",
  archivePrefix = "arXiv",
  primaryClass  = "cs.LG",
  eprint        = "2506.12203",
  doi           = "10.48550/arXiv.2506.12203"
}

@ARTICLE{Wang2025DPTrajPM,
  title     = "{DPTraj-PM: Differentially private trajectory synthesis using
               prefix tree and Markov process}",
  author    = "Wang, Nana and Kankanhalli, Mohan",
  journal   = "ACM Transactions on Spatial Algorithms and Systems",
  publisher = "Association for Computing Machinery (ACM)",
  number    =  3778164,
  year      =  2025,
  optionalurl       = "http://dx.doi.org/10.1145/3778164",
  doi       = "10.1145/3778164"
}

@INPROCEEDINGS{He2023PSMM,
  title     = "{Algorithmically Effective Differentially Private Synthetic Data}",
  author    = "He, Yiyun and Vershynin, Roman and Zhu, Yizhe",
  booktitle = "{The Thirty Sixth Annual Conference on Learning Theory}",
  publisher = "PMLR",
  pages     = "3941--3968",
  year      =  2023,
  optionalurl       = "https://proceedings.mlr.press/v195/he23a.html"
}

@INPROCEEDINGS{Wang2025PEimagetext,
  title     = "{Synthesize Privacy-Preserving High-Resolution Images via Private
               Textual Intermediaries}",
  author    = "Wang, Haoxiang and Lin, Zinan and Yu, Da and Zhang, Huishuai",
  booktitle = "{The Thirty-ninth Annual Conference on Neural Information
               Processing Systems}",
  year      =  2025,
  optionalurl       = "https://openreview.net/forum?id=g8zr9rxRHm"
}

@INPROCEEDINGS{Lin2025PEsim,
  title     = "{Differentially Private Synthetic Data via APIs 3: Using
               Simulators Instead of Foundation Model}",
  author    = "Lin, Zinan and Baltrusaitis, Tadas and Yekhanin, Sergey",
  booktitle = "{Will Synthetic Data Finally Solve the Data Access Problem?}",
  year      =  2025,
  optionalurl       = "https://openreview.net/forum?id=7AvUO96Pmk"
}

@INPROCEEDINGS{Swanberg2025PEtabular,
  title     = "{Is API Access to LLMs Useful for Generating Private Synthetic
               Tabular Data?}",
  author    = "Swanberg, Marika and McKenna, Ryan and Roth, Edo and Cheu, Albert
               and Kairouz, Peter",
  booktitle = "{Will Synthetic Data Finally Solve the Data Access Problem?}",
  year      =  2025,
  optionalurl       = "https://openreview.net/forum?id=e4uwE1muhR"
}

@INPROCEEDINGS{Zhang2025PCEvolve,
  title     = "{PCEvolve: Private Contrastive Evolution for Synthetic Dataset
               Generation via Few-Shot Private Data and Generative {APIs}}",
  author    = "Zhang, Jianqing and Liu, Yang and Fu, Jie and Hua, Yang and Zou,
               Tianyuan and Cao, Jian and Yang, Qiang",
  booktitle = "{Forty-second International Conference on Machine Learning}",
  year      =  2025,
  optionalurl       = "https://openreview.net/forum?id=IKCfxWtTsu$\lnot{}$eId=UJypmbyVvk"
}

@INPROCEEDINGS{Lin2023PEimage,
  title     = "{Differentially Private Synthetic Data via Foundation Model APIs
               1: Images}",
  author    = "Lin, Zinan and Gopi, Sivakanth and Kulkarni, Janardhan and Nori,
               Harsha and Yekhanin, Sergey",
  booktitle = "{The Twelfth International Conference on Learning
               Representations}",
  year      =  2023,
  optionalurl       = "https://openreview.net/pdf?id=YEhQs8POIo"
}

@INPROCEEDINGS{Gonzalez2025PEconverge,
  title     = "{Private Evolution Converges}",
  author    = "Gonz\'{a}lez, Tom\'{a}s and Fanti, Giulia and Ramdas, Aaditya",
  booktitle = "{The Thirty-ninth Annual Conference on Neural Information
               Processing Systems}",
  year      =  2025,
  optionalurl       = "https://openreview.net/forum?id=zOCENGh1Jg"
}

@INPROCEEDINGS{Xie2024PEtext,
  title     = "{Differentially Private Synthetic Data via Foundation Model APIs
               2: Text}",
  author    = "Xie, Chulin and Lin, Zinan and Backurs, Arturs and Gopi,
               Sivakanth and Yu, Da and Inan, Huseyin A and Nori, Harsha and
               Jiang, Haotian and Zhang, Huishuai and Lee, Yin Tat and Li, Bo
               and Yekhanin, Sergey",
  booktitle = "{International Conference on Machine Learning}",
  publisher = "PMLR",
  pages     = "54531--54560",
  year      =  2024,
  optionalurl       = "https://proceedings.mlr.press/v235/xie24g.html"
}

@ARTICLE{Weise2024TRE,
  title     = "{Trusted Research Environments: Analysis of characteristics and
               data availability}",
  author    = "Weise, Martin and Rauber, Andreas",
  journal   = "International Journal of Digital Curation",
  publisher = "University of Edinburgh",
  volume    =  18,
  number    =  1,
  year      =  2024,
  optionalurl       = "http://dx.doi.org/10.2218/ijdc.v18i1.939",
  doi       = "10.2218/ijdc.v18i1.939"
}

@MISC{USCensusBureau2018SIPP,
  title        = "{The Creation and Use of the SIPP Synthetic Beta v7.0}",
  author       = "{US Census Bureau}",
  booktitle    = "{Census.gov}",
  year         =  2018,
  howpublished = "\url{https://www.census.gov/library/working-papers/2018/adrm/SIPP-Synthetic-Beta.html}"
}

@INCOLLECTION{Drechsler2014synthLB,
  title     = "{Synthetic longitudinal business databases for international
               comparisons}",
  author    = "Drechsler, J{\"{o}}rg and Vilhuber, Lars",
  booktitle = "{Privacy in Statistical Databases}",
  publisher = "Springer International Publishing",
  pages     = "243--252",
  series    = "Lecture Notes in Computer Science",
  year      =  2014,
  optionalurl       = "http://dx.doi.org/10.1007/978-3-319-11257-2_19",
  doi       = "10.1007/978-3-319-11257-2\_19"
}

@INPROCEEDINGS{Ye2022MIA,
  title     = "{Enhanced membership inference attacks against machine learning
               models}",
  author    = "Ye, Jiayuan and Maddi, Aadyaa and Murakonda, Sasi Kumar and
               Bindschaedler, Vincent and Shokri, Reza",
  booktitle = "{Proceedings of the 2022 ACM SIGSAC Conference on Computer and
               Communications Security}",
  publisher = "ACM",
  year      =  2022,
  optionalurl       = "https://dl.acm.org/doi/10.1145/3548606.3560675",
  doi       = "10.1145/3548606.3560675"
}

@ARTICLE{Dong2022GDP,
  title     = "{Gaussian differential privacy}",
  author    = "Dong, Jinshuo and Roth, Aaron and Su, Weijie J",
  journal   = "Journal of the Royal Statistical Society. Series B, Statistical
               methodology",
  publisher = "Oxford University Press (OUP)",
  volume    =  84,
  number    =  1,
  pages     = "3--37",
  year      =  2022,
  optionalurl       = "http://dx.doi.org/10.1111/rssb.12454",
  doi       = "10.1111/rssb.12454"
}

@INPROCEEDINGS{Gaboardi2020OpenDP,
  title     = "{A Programming Framework for {OpenDP}}",
  author    = "Gaboardi, Marco and Hay, Michael and Vadhan, Salil",
  booktitle = "{6th Workshop on the Theory and Practice of Differential Privacy
               (TPDP 2020)}",
  year      =  2020,
  optionalurl       = "https://privacytools.seas.harvard.edu/sites/g/files/omnuum6656/files/privacytools/files/opendp_programming_framework_11.pdf"
}

@INPROCEEDINGS{Hsu2023habitsmining,
  title     = "{Learning Habits Mining and data-driven support of building
               habits in education}",
  author    = "Hsu, Chia-Yu and Horikoshi, Izumi and Majumdar, Rwitajit and
               Ogata, Hiroaki",
  booktitle = "{Proceedings of the 31st International Conference on Computers in
               Education}",
  publisher = "Asia-Pacific Society for Computers in Education",
  year      =  2023,
  optionalurl       = "https://library.apsce.net/index.php/ICCE/article/view/4783",
  doi       = "10.58459/icce.2023.4783"
}

@ARTICLE{ShirvaniBoroujeni2019MOOChabits,
  title     = "{Discovery and temporal analysis of MOOC study patterns}",
  author    = "Shirvani Boroujeni, Mina and Dillenbourg, Pierre",
  journal   = "Journal of learning analytics",
  publisher = "Society for Learning Analytics Research",
  volume    =  6,
  number    =  1,
  pages     = "16--33",
  year      =  2019,
  optionalurl       = "http://dx.doi.org/10.18608/jla.2019.61.2",
  doi       = "10.18608/jla.2019.61.2"
}

@ARTICLE{Magulod2019habits,
  title     = "{Learning styles, study habits and academic performance of
               Filipino University students in applied science courses:
               Implications for instruction}",
  author    = "Magulod, Gilbert",
  journal   = "Journal of technology and science education",
  publisher = "Omnia Publisher SL",
  volume    =  9,
  number    =  2,
  pages     =  184,
  year      =  2019,
  optionalurl       = "http://dx.doi.org/10.3926/jotse.504",
  doi       = "10.3926/jotse.504"
}

@ARTICLE{Ricker2020habits,
  title     = "{Student clickstream data: Does time of day matter?}",
  author    = "Ricker, Gina and Koziarski, Mathew and Walters, Alyssa",
  journal   = "Journal of Online Learning Research",
  publisher = "Association for the Advancement of Computing in Education. P.O.
               Box 719, Waynesville, NC 28786. Tel: 828-246-9558; Fax:
               828-246-9557; e-mail: info@aace.org; Web site:
               https://www.aace.org/pubs/jolr/",
  volume    =  6,
  number    =  2,
  pages     = "155--170",
  year      =  2020,
  optionalurl       = "https://eric.ed.gov/?id=EJ1273645"
}

@INCOLLECTION{Hsu2024productivity,
  title     = "{Evaluating productivity of learning habits using math learning
               logs: Do K12 learners manage their time effectively?}",
  author    = "Hsu, Chia-Yu and Horikoshi, Izumi and Li, Huiyong and Majumdar,
               Rwitajit and Ogata, Hiroaki",
  editor    = "Ferreira, Mello and {Rafael} and Rummel, Nikol and Jivet, Ioana
               and Pishtari, Gerti and Ruip\'{e}rez Valiente, Jos\'{e} A",
  booktitle = "{Technology Enhanced Learning for Inclusive and Equitable Quality
               Education}",
  publisher = "Springer Nature Switzerland",
  volume    =  15159,
  pages     = "168--178",
  year      =  2024,
  optionalurl       = "https://link.springer.com/chapter/10.1007/978-3-031-72315-5_12",
  doi       = "10.1007/978-3-031-72315-5\_12"
}

@INPROCEEDINGS{Long2024LLMsynth,
  title     = "{On LLMs-driven synthetic data generation, curation, and
               evaluation: A survey}",
  author    = "Long, Lin and Wang, Rui and Xiao, Ruixuan and Zhao, Junbo and
               Ding, Xiao and Chen, Gang and Wang, Haobo",
  booktitle = "{Findings of the Association for Computational Linguistics ACL
               2024}",
  publisher = "Association for Computational Linguistics",
  pages     = "11065--11082",
  year      =  2024,
  optionalurl       = "http://dx.doi.org/10.18653/v1/2024.findings-acl.658",
  doi       = "10.18653/v1/2024.findings-acl.658"
}

@INPROCEEDINGS{Hod2025surrogate,
  title     = "{Do You Really Need Public Data? Surrogate Public Data for
               Differential Privacy on Tabular Data}",
  author    = "Hod, Shlomi and Rosenblatt, Lucas and Stoyanovich, Julia",
  booktitle = "{The Thirty-ninth Annual Conference on Neural Information
               Processing Systems Datasets and Benchmarks Track}",
  year      =  2025,
  optionalurl       = "https://openreview.net/forum?id=BpT5w97FsP"
}

@BOOK{Griffiths2024SDC,
  title     = "{Handbook on Statistical Disclosure Control for Outputs}",
  author    = "Griffiths, Emily and Greci, Carlotta and Kotrotsios, Yannis and
               Parker, Simon and Scott, James and Welpton, Richard and Wolters,
               Arne and Woods, Christine",
  publisher = "Safe Data Access Professionals Working Group",
  year      =  2024
}

@INCOLLECTION{Green2024statbarn,
  title     = "{The statbarn: A new model for output statistical disclosure
               control}",
  author    = "Green, Elizabeth and Ritche, Felix and White, Paul",
  booktitle = "{Lecture Notes in Computer Science}",
  publisher = "Springer Nature Switzerland",
  pages     = "284--293",
  series    = "Lecture notes in computer science",
  year      =  2024,
  optionalurl       = "http://dx.doi.org/10.1007/978-3-031-69651-0_19",
  doi       = "10.1007/978-3-031-69651-0\_19"
}

@INPROCEEDINGS{Cretu2022QuerySnout,
  title     = "{QuerySnout: Automating the discovery of attribute inference
               attacks against query-based systems}",
  author    = "Cretu, Ana-Maria and Houssiau, Florimond and Cully, Antoine and
               de Montjoye, Yves-Alexandre",
  booktitle = "{Proceedings of the 2022 ACM SIGSAC Conference on Computer and
               Communications Security}",
  publisher = "ACM",
  year      =  2022,
  optionalurl       = "https://dl.acm.org/doi/10.1145/3548606.3560581",
  doi       = "10.1145/3548606.3560581"
}

@INCOLLECTION{Dwork2006approxDP,
  title     = "{Our data, ourselves: Privacy via distributed noise generation}",
  author    = "Dwork, Cynthia and Kenthapadi, Krishnaram and McSherry, Frank and
               Mironov, Ilya and Naor, Moni",
  booktitle = "{Advances in Cryptology - EUROCRYPT 2006}",
  publisher = "Springer Berlin Heidelberg",
  pages     = "486--503",
  series    = "Lecture Notes in Computer Science",
  year      =  2006,
  optionalurl       = "https://dl.acm.org/doi/10.1007/11761679_29",
  doi       = "10.1007/11761679\_29"
}

@MISC{Gomez2025GDP,
  title         = "{Gaussian DP for reporting differential privacy guarantees in
                   machine learning}",
  author        = "Gomez, Juan Felipe and Kulynych, Bogdan and Kaissis, Georgios
                   and Calmon, Flavio P and Hayes, Jamie and Balle, Borja and
                   Honkela, Antti",
  year          =  2025,
  optionalurl           = "http://arxiv.org/abs/2503.10945",
  archivePrefix = "arXiv",
  primaryClass  = "cs.LG",
  eprint        = "2503.10945"
}

@INCOLLECTION{Ogata2017BR,
  title     = "{Learning analytics for E-book-based educational big data in
               higher education}",
  author    = "Ogata, Hiroaki and Oi, Misato and Mohri, Kousuke and Okubo,
               Fumiya and Shimada, Atsushi and Yamada, Masanori and Wang,
               Jingyun and Hirokawa, Sachio",
  booktitle = "{Smart Sensors at the IoT Frontier}",
  publisher = "Springer International Publishing",
  pages     = "327--350",
  year      =  2017,
  optionalurl       = "http://dx.doi.org/10.1007/978-3-319-55345-0_13",
  doi       = "10.1007/978-3-319-55345-0\_13"
}

@INPROCEEDINGS{Hsu2023chronotypes,
  title     = "{Chronotypes of Learning Habits in Weekly Math Learning of Junior
               High School}",
  author    = "Hsu, Chia-Yu and Otgonbaatar, Mandukhai and Horikoshi, Izumi and
               Li, Huiyong and Majumdar, Rwitajit and Ogata, Hiroaki",
  booktitle = "{Proceedings of the 31st International Conference on Computers in
               Education}",
  publisher = "Asia-Pacific Society for Computers in Education",
  volume    =  1,
  pages     = "566--568",
  year      =  2023,
  optionalurl       = "https://library.apsce.net/index.php/ICCE/article/view/4717"
}

@INPROCEEDINGS{Kingma2014SSL,
  title     = "{Semi-supervised Learning with Deep Generative Models}",
  author    = "Kingma, Diederik P and Rezende, Danilo J and Mohamed, Shakir and
               Welling, Max",
  editor    = "Ghahramani, Z and Welling, M and Cortes, C and Lawrence, N and
               Weinberger, K Q",
  booktitle = "{Advances in Neural Information Processing Systems}",
  publisher = "Curran Associates, Inc.",
  volume    =  27,
  year      =  2014,
  optionalurl       = "https://proceedings.neurips.cc/paper_files/paper/2014/file/6d42b1217a6996997ead5a8398c1f944-Paper.pdf"
}

@INPROCEEDINGS{Yousefpour2021Opacus,
  title     = "{Opacus: User-Friendly Differential Privacy Library in {PyTorch}}",
  author    = "Yousefpour, Ashkan and Shilov, Igor and Sablayrolles, Alexandre
               and Testuggine, Davide and Prasad, Karthik and Malek, Mani and
               Nguyen, John and Ghosh, Sayan and Bharadwaj, Akash and Zhao,
               Jessica and Cormode, Graham and Mironov, Ilya",
  booktitle = "{NeurIPS 2021 Workshop Privacy in Machine Learning}",
  year      =  2021,
  optionalurl       = "https://openreview.net/forum?id=EopKEYBoI-"
}

@INPROCEEDINGS{Abadi2016DPSGD,
  title     = "{Deep Learning with Differential Privacy}",
  author    = "Abadi, Martin and Chu, Andy and Goodfellow, Ian and McMahan, H
               Brendan and Mironov, Ilya and Talwar, Kunal and Zhang, Li",
  booktitle = "{Proceedings of the 2016 ACM SIGSAC Conference on Computer and
               Communications Security}",
  publisher = "ACM",
  pages     = "308--318",
  year      =  2016,
  optionalurl       = "https://dl.acm.org/doi/10.1145/2976749.2978318",
  doi       = "10.1145/2976749.2978318"
}

@INPROCEEDINGS{Kingma2015Adam,
  title     = "{Adam: A method for stochastic optimization}",
  author    = "Kingma, Diederik P and Ba, Jimmy",
  booktitle = "{Proceedings of the 3rd International Conference on Learning
               Representations (ICLR 2015)}",
  year      =  2015,
  optionalurl       = "https://arxiv.org/abs/1412.6980"
}

@INPROCEEDINGS{Mironov2017RDP,
  title     = "{R\'{e}nyi Differential Privacy}",
  author    = "Mironov, Ilya",
  booktitle = "{2017 IEEE 30th Computer Security Foundations Symposium (CSF)}",
  publisher = "IEEE",
  pages     = "263--275",
  year      =  2017,
  optionalurl       = "https://ieeexplore.ieee.org/abstract/document/8049725",
  doi       = "10.1109/csf.2017.11"
}

@ARTICLE{Stenger2024eval,
  title     = "{Evaluation is key: a survey on evaluation measures for synthetic
               time series}",
  author    = "Stenger, Michael and Leppich, Robert and Foster, Ian and Kounev,
               Samuel and Bauer, Andr\'{e}",
  journal   = "Journal of big data",
  publisher = "Springer Science and Business Media LLC",
  volume    =  11,
  number    =  1,
  pages     = "1--56",
  year      =  2024,
  optionalurl       = "http://dx.doi.org/10.1186/s40537-024-00924-7",
  doi       = "10.1186/s40537-024-00924-7"
}

@ARTICLE{Rosenblatt2023parity,
  title     = "{Epistemic parity: Reproducibility as an evaluation metric for
               differential privacy}",
  author    = "Rosenblatt, Lucas and Herman, Bernease and Holovenko, Anastasia
               and Lee, Wonkwon and Loftus, Joshua and McKinnie, Elizabeth and
               Rumezhak, Taras and Stadnik, Andrii and Howe, Bill and
               Stoyanovich, Julia",
  journal   = "Proceedings of the VLDB Endowment International Conference on
               Very Large Data Bases",
  publisher = "Association for Computing Machinery (ACM)",
  volume    =  16,
  number    =  11,
  pages     = "3178--3191",
  year      =  2023,
  optionalurl       = "https://dl.acm.org/doi/10.14778/3611479.3611517",
  doi       = "10.14778/3611479.3611517"
}

@INPROCEEDINGS{Salem2023privacygames,
  title     = "{SoK: Let the privacy games begin! A unified treatment of data
               inference privacy in machine learning}",
  author    = "Salem, Ahmed and Cherubin, Giovanni and Evans, David and
               K{\"{o}}pf, Boris and Paverd, Andrew and Suri, Anshuman and
               Tople, Shruti and Zanella-B\'{e}guelin, Santiago",
  booktitle = "{2023 IEEE Symposium on Security and Privacy (SP)}",
  publisher = "IEEE",
  pages     = "327--345",
  year      =  2023,
  optionalurl       = "http://dx.doi.org/10.1109/SP46215.2023.10179281",
  doi       = "10.1109/sp46215.2023.10179281"
}

@INPROCEEDINGS{Yeom2018privacyrisk,
  title     = "{Privacy risk in machine learning: Analyzing the connection to
               overfitting}",
  author    = "Yeom, Samuel and Giacomelli, Irene and Fredrikson, Matt and Jha,
               Somesh",
  booktitle = "{2018 IEEE 31st Computer Security Foundations Symposium (CSF)}",
  publisher = "IEEE",
  pages     = "268--282",
  year      =  2018,
  optionalurl       = "https://www.computer.org/csdl/proceedings-article/csf/2018/668001a268/12OmNyQGSca",
  doi       = "10.1109/csf.2018.00027"
}

@INPROCEEDINGS{Carlini2022LiRA,
  title     = "{Membership Inference Attacks From First Principles}",
  author    = "Carlini, Nicholas and Chien, Steve and Nasr, Milad and Song,
               Shuang and Terzis, Andreas and Tram\`{e}r, Florian",
  booktitle = "{2022 IEEE Symposium on Security and Privacy (SP)}",
  publisher = "IEEE",
  pages     = "1897--1914",
  year      =  2022,
  optionalurl       = "http://dx.doi.org/10.1109/SP46214.2022.9833649",
  doi       = "10.1109/SP46214.2022.9833649"
}

@ARTICLE{Neyman1933NP,
  title     = "{IX. On the problem of the most efficient tests of statistical
               hypotheses}",
  author    = "Neyman, Jerzy and Pearson, Egon S",
  journal   = "Philosophical transactions of the Royal Society of London",
  publisher = "The Royal Society",
  volume    =  231,
  number    = "694-706",
  pages     = "289--337",
  year      =  1933,
  optionalurl       = "https://royalsocietypublishing.org/doi/10.1098/rsta.1933.0009",
  doi       = "10.1098/rsta.1933.0009"
}

@INPROCEEDINGS{Kaissis2024regret,
  title     = "{Beyond the calibration point: Mechanism comparison in
               differential privacy}",
  author    = "Kaissis, G and Kolek, Stefan and Balle, Borja and Hayes, Jamie
               and Rueckert, D",
  editor    = "Salakhutdinov, Ruslan and Kolter, Zico and Heller, Katherine and
               Weller, Adrian and Oliver, Nuria and Scarlett, Jonathan and
               Berkenkamp, Felix",
  booktitle = "{Proceedings of the 41st International Conference on Machine
               Learning}",
  publisher = "PMLR",
  volume    =  235,
  pages     = "22840--22860",
  year      =  2024,
  optionalurl       = "https://proceedings.mlr.press/v235/kaissis24a.html"
}

@INPROCEEDINGS{Bergstra2011TPE,
  title     = "{Algorithms for Hyper-Parameter Optimization}",
  author    = "Bergstra, James and Bardenet, R\'{e}mi and Bengio, Yoshua and
               K\'{e}gl, Bal\'{a}zs",
  editor    = "Shawe-Taylor, J and Zemel, R and Bartlett, P and Pereira, F and
               Weinberger, K Q",
  booktitle = "{Advances in Neural Information Processing Systems}",
  publisher = "Curran Associates, Inc.",
  volume    =  24,
  year      =  2011,
  optionalurl       = "https://proceedings.neurips.cc/paper_files/paper/2011/file/86e8f7ab32cfd12577bc2619bc635690-Paper.pdf"
}

@INPROCEEDINGS{Akiba2019Optuna,
  title     = "{Optuna: A Next-generation Hyperparameter Optimization Framework}",
  author    = "Akiba, Takuya and Sano, Shotaro and Yanase, Toshihiko and Ohta,
               Takeru and Koyama, Masanori",
  booktitle = "{Proceedings of the 25th ACM SIGKDD International Conference on
               Knowledge Discovery \& Data Mining}",
  publisher = "ACM",
  pages     = "2623--2631",
  year      =  2019,
  optionalurl       = "http://dx.doi.org/10.1145/3292500.3330701",
  doi       = "10.1145/3292500.3330701"
}

@INPROCEEDINGS{Balle2018Gaussian,
  title     = "{Improving the Gaussian Mechanism for Differential Privacy:
               Analytical Calibration and Optimal Denoising}",
  author    = "Balle, Borja and Wang, Yu-Xiang",
  booktitle = "{International Conference on Machine Learning}",
  publisher = "PMLR",
  pages     = "394--403",
  year      =  2018,
  optionalurl       = "https://proceedings.mlr.press/v80/balle18a.html"
}

@ARTICLE{Shimizu2011DirectLiNGAM,
  title     = "{DirectLiNGAM: A direct method for learning a linear non-Gaussian
               structural equation model}",
  author    = "Shimizu, Shohei and Inazumi, Takanori and Sogawa, Yasuhiro and
               Hyv{\"{a}}rinen, Aapo and Kawahara, Yoshinobu and Washio, Takashi
               and Hoyer, Patrik O and Bollen, Kenneth",
  journal   = "Journal of Machine Learning Research",
  publisher = "JMLR.orgPUB6573",
  volume    =  12,
  pages     = "1225--1248",
  year      =  2011,
  optionalurl       = "https://dl.acm.org/doi/10.5555/1953048.2021040",
  doi       = "10.5555/1953048.2021040"
}

@ARTICLE{Leyder2024TSLiNGAM,
  title     = "{TSLiNGAM: DirectLiNGAM under heavy tails}",
  author    = "Leyder, Sarah and Raymaekers, Jakob and Verdonck, Tim",
  journal   = "Journal of Computational and Graphical Statistics: A Joint
               Publication of American Statistical Association, Institute of
               Mathematical Statistics, Interface Foundation of North America",
  publisher = "Informa UK Limited",
  pages     = "1--20",
  year      =  2024,
  optionalurl       = "http://dx.doi.org/10.1080/10618600.2024.2394462",
  doi       = "10.1080/10618600.2024.2394462"
}

@INPROCEEDINGS{Dibia2025DPusability,
  title     = "{SoK: Usability studies in differential privacy}",
  author    = "Dibia, Onyinye and Bhattacharjee, Prianka and Stenger, Brad and
               Baldasty, Steven and Bates, Mako and Ngong, Ivoline and Feng,
               Yuanyuan and Near, Joseph",
  booktitle = "{Proceedings on Privacy Enhancing Technologies Symposium}",
  publisher = "Privacy Enhancing Technologies Symposium Advisory Board",
  volume    =  2025,
  pages     = "881--895",
  series    =  4,
  year      =  2025,
  optionalurl       = "https://petsymposium.org/popets/2025/popets-2025-0162.php",
  doi       = "10.56553/popets-2025-0162"
}

@INPROCEEDINGS{Cyffers2025DPsetting,
  title     = "{Setting $\varepsilon$ is not the Issue in Differential Privacy}",
  author    = "Cyffers, Edwige",
  booktitle = "{The Thirty-Ninth Annual Conference on Neural Information
               Processing Systems Position Paper Track}",
  year      =  2025,
  optionalurl       = "https://openreview.net/forum?id=tp94g4Vmad"
}

@TECHREPORT{Hubbard2020TRE,
  title       = "{Trusted Research Environments (TRE) Green Paper}",
  author      = "Hubbard, Tim and Reilly, Gerry and Varma, Susheel and Seymour,
                 David",
  publisher   = "Zenodo",
  institution = "UK Health Data Research Alliance",
  year        =  2020,
  optionalurl         = "https://zenodo.org/records/4594704",
  doi         = "10.5281/ZENODO.4594704"
}

@ARTICLE{Nab2024OpenSAFELY,
  title     = "{OpenSAFELY: A platform for analysing electronic health records
               designed for reproducible research}",
  author    = "Nab, Linda and Schaffer, Andrea L and Hulme, William and DeVito,
               Nicholas J and Dillingham, Iain and Wiedemann, Milan and Andrews,
               Colm D and Curtis, Helen and Fisher, Louis and Green, Amelia and
               Massey, Jon and Walters, Caroline E and Higgins, Rose and
               Cunningham, Christine and Morley, Jessica and Mehrkar, Amir and
               Hart, Liam and Davy, Simon and Evans, David and Hickman, George
               and Inglesby, Peter and Morton, Caroline E and Smith, Rebecca M
               and Ward, Tom and O'Dwyer, Thomas and Maude, Steven and Bridges,
               Lucy and Butler-Cole, Ben F C and Stables, Catherine L and
               Stokes, Pete and Bates, Chris and Cockburn, Jonny and Hester,
               Frank and Parry, John and Bhaskaran, Krishnan and Schultze, Anna
               and Rentsch, Christopher T and Mathur, Rohini and Tomlinson,
               Laurie A and Williamson, Elizabeth J and Smeeth, Liam and Walker,
               Alex and Bacon, Sebastian and MacKenna, Brian and Goldacre, Ben",
  journal   = "Pharmacoepidemiology and Drug Safety",
  publisher = "Wiley",
  volume    =  33,
  number    =  6,
  pages     = "e5815",
  year      =  2024,
  optionalurl       = "http://dx.doi.org/10.1002/pds.5815",
  doi       = "10.1002/pds.5815"
}

@INPROCEEDINGS{Shin2017replay,
  title     = "{Continual Learning with Deep Generative Replay}",
  author    = "Shin, Hanul and Lee, Jung Kwon and Kim, Jaehong and Kim, Jiwon",
  editor    = "Guyon, I and Luxburg, U Von and Bengio, S and Wallach, H and
               Fergus, R and Vishwanathan, S and Garnett, R",
  booktitle = "{Advances in Neural Information Processing Systems}",
  publisher = "Curran Associates, Inc.",
  volume    =  30,
  year      =  2017,
  optionalurl       = "https://proceedings.neurips.cc/paper_files/paper/2017/file/0efbe98067c6c73dba1250d2beaa81f9-Paper.pdf"
}

@INPROCEEDINGS{vanDenOord2017VQ,
  title     = "{Neural Discrete Representation Learning}",
  author    = "van den Oord, Aaron and Vinyals, Oriol and Kavukcuoglu, Koray",
  editor    = "Guyon, I and Luxburg, U Von and Bengio, S and Wallach, H and
               Fergus, R and Vishwanathan, S and Garnett, R",
  booktitle = "{Advances in Neural Information Processing Systems}",
  publisher = "Curran Associates, Inc.",
  volume    =  30,
  year      =  2017,
  optionalurl       = "https://proceedings.neurips.cc/paper_files/paper/2017/file/7a98af17e63a0ac09ce2e96d03992fbc-Paper.pdf"
}

@ARTICLE{Srivastava2014Dropout,
  title   = "{Dropout: a simple way to prevent neural networks from overfitting}",
  author  = "Srivastava, Nitish and Hinton, Geoffrey E and Krizhevsky, A and
             Sutskever, I and Salakhutdinov, R",
  journal = "Journal of Machine Learning Research",
  volume  =  15,
  number  =  56,
  pages   = "1929--1958",
  year    =  2014,
  optionalurl     = "http://dx.doi.org/10.5555/2627435.2670313",
  doi     = "10.5555/2627435.2670313"
}

@INPROCEEDINGS{Fu2019annealing,
  title     = "{Cyclical Annealing Schedule: A Simple Approach to Mitigating KL
               Vanishing}",
  author    = "Fu, Hao and Li, Chunyuan and Liu, Xiaodong and Gao, Jianfeng and
               Celikyilmaz, Asli and Carin, Lawrence",
  editor    = "Burstein, Jill and Doran, Christy and Solorio, Thamar",
  booktitle = "{Proceedings of the 2019 Conference of the North American Chapter
               of the Association for Computational Linguistics: Human Language
               Technologies, Volume 1 (Long and Short Papers)}",
  publisher = "Association for Computational Linguistics",
  pages     = "240--250",
  year      =  2019,
  optionalurl       = "http://dx.doi.org/10.18653/v1/N19-1021",
  doi       = "10.18653/v1/N19-1021"
}

\newpage

 





\end{document}